\PassOptionsToPackage{unicode}{hyperref}
\PassOptionsToPackage{hyphens}{url}
\PassOptionsToPackage{dvipsnames,svgnames,x11names}{xcolor}
\documentclass[
  12pt]{article}

\usepackage{amsmath,amssymb}
\usepackage{iftex}
\ifPDFTeX
  \usepackage[T1]{fontenc}
  \usepackage[utf8]{inputenc}
  \usepackage{textcomp} % provide euro and other symbols
\else % if luatex or xetex
  \usepackage{unicode-math}
  \defaultfontfeatures{Scale=MatchLowercase}
  \defaultfontfeatures[\rmfamily]{Ligatures=TeX,Scale=1}
\fi
\usepackage{lmodern}
\ifPDFTeX\else  
\fi
\IfFileExists{upquote.sty}{\usepackage{upquote}}{}
\IfFileExists{microtype.sty}{% use microtype if available
  \usepackage[]{microtype}
  \UseMicrotypeSet[protrusion]{basicmath} % disable protrusion for tt fonts
}{}
\makeatletter
\@ifundefined{KOMAClassName}{% if non-KOMA class
  \IfFileExists{parskip.sty}{%
    \usepackage{parskip}
  }{% else
    \setlength{\parindent}{0pt}
    \setlength{\parskip}{6pt plus 2pt minus 1pt}}
}{% if KOMA class
  \KOMAoptions{parskip=half}}
\makeatother
\usepackage{xcolor}
\makeatletter
\ifx\paragraph\undefined\else
  \let\oldparagraph\paragraph
  \renewcommand{\paragraph}{
    \@ifstar
      \xxxParagraphStar
      \xxxParagraphNoStar
  }
  \newcommand{\xxxParagraphStar}[1]{\oldparagraph*{#1}\mbox{}}
  \newcommand{\xxxParagraphNoStar}[1]{\oldparagraph{#1}\mbox{}}
\fi
\ifx\subparagraph\undefined\else
  \let\oldsubparagraph\subparagraph
  \renewcommand{\subparagraph}{
    \@ifstar
      \xxxSubParagraphStar
      \xxxSubParagraphNoStar
  }
  \newcommand{\xxxSubParagraphStar}[1]{\oldsubparagraph*{#1}\mbox{}}
  \newcommand{\xxxSubParagraphNoStar}[1]{\oldsubparagraph{#1}\mbox{}}
\fi
\makeatother

\usepackage{longtable,booktabs,array}
\usepackage{calc} % for calculating minipage widths
\usepackage{etoolbox}
\makeatletter
\patchcmd\longtable{\par}{\if@noskipsec\mbox{}\fi\par}{}{}
\makeatother
\IfFileExists{footnotehyper.sty}{\usepackage{footnotehyper}}{\usepackage{footnote}}
\makesavenoteenv{longtable}
\usepackage{graphicx}
\makeatletter
\def\maxwidth{\ifdim\Gin@nat@width>\linewidth\linewidth\else\Gin@nat@width\fi}
\def\maxheight{\ifdim\Gin@nat@height>\textheight\textheight\else\Gin@nat@height\fi}
\makeatother
\setkeys{Gin}{width=\maxwidth,height=\maxheight,keepaspectratio}
\makeatletter
\def\fps@figure{htbp}
\makeatother

\makeatletter
\@ifpackageloaded{caption}{}{\usepackage{caption}}
\AtBeginDocument{%
\ifdefined\contentsname
  \renewcommand*\contentsname{Table of contents}
\else
  \newcommand\contentsname{Table of contents}
\fi
\ifdefined\listfigurename
  \renewcommand*\listfigurename{List of Figures}
\else
  \newcommand\listfigurename{List of Figures}
\fi
\ifdefined\listtablename
  \renewcommand*\listtablename{List of Tables}
\else
  \newcommand\listtablename{List of Tables}
\fi
\ifdefined\figurename
  \renewcommand*\figurename{Figure}
\else
  \newcommand\figurename{Figure}
\fi
\ifdefined\tablename
  \renewcommand*\tablename{Table}
\else
  \newcommand\tablename{Table}
\fi
}
\@ifpackageloaded{float}{}{\usepackage{float}}
\floatstyle{ruled}
\@ifundefined{c@chapter}{\newfloat{codelisting}{h}{lop}}{\newfloat{codelisting}{h}{lop}[chapter]}
\floatname{codelisting}{Listing}

\makeatother
\makeatletter
\@ifpackageloaded{caption}{}{\usepackage{caption}}
\@ifpackageloaded{subcaption}{}{\usepackage{subcaption}}
\makeatother

\ifLuaTeX
  \usepackage{selnolig}  % disable illegal ligatures
\fi
\usepackage[]{natbib}
\usepackage{bookmark}

\IfFileExists{xurl.sty}{\usepackage{xurl}}{} % add URL line breaks if available
\hypersetup{
  pdftitle={Title},
  pdfauthor={Author 1; Author 2},
  pdfkeywords={3 to 6 keywords, that do not appear in the title},
  colorlinks=true,
  linkcolor={blue},
  filecolor={Maroon},
  citecolor={Blue},
  urlcolor={Blue},
  pdfcreator={LaTeX via pandoc}}

\newcommand{\anon}{1}

\usepackage{amsmath,amsthm} 
\usepackage{enumitem}
\usepackage{ctable}
\usepackage[table,xcdraw]{xcolor}
\usepackage{colortbl, booktabs}
\usepackage{multirow}
\usepackage{hhline}

\newtheorem{proposition}{Proposition}
\usepackage[table]{xcolor}

\allowdisplaybreaks

\def\ba{{\boldsymbol{a}}}
\def\bb{{\boldsymbol{b}}}

\def\bt{{\boldsymbol{t}}}

\def\bx{{\boldsymbol{x}}}
\def\by{{\boldsymbol{y}}}

\def\bB{{\mathbf{B}}}

\def\bR{{\mathbf{R}}}
\def\bS{{\mathbf{S}}}
\def\bT{{\mathbf{T}}}
\def\bX{{\mathbf{X}}}
\def\bY{{\mathbf{Y}}}

\def\bbeta{\boldsymbol{\beta}}

\def\bmu{\boldsymbol{\mu}}

\def\bvartheta{\boldsymbol{\vartheta}}

\def\bDelta{\boldsymbol{\Delta}}

\def\bSigma{\boldsymbol{\Sigma}}

\def\bOmega{\boldsymbol{\Omega}}

\begin{document}

\def\spacingset#1{\renewcommand{\baselinestretch}%
{#1}\small\normalsize} \spacingset{1}

%%%%%%%%%%%%%%%%%%%%%%%%%%%%%%%%%%%%%%%%%%%%%%%%%%%%%%%%%%%%%%%%%%%%%%%%%%%%%%

\if1\anon
{
  \title{\bf A Unified Framework for Heterogeneity, Contamination, and Missing Data in Multivariate Regression}
  \author{Hung Tong \\
    Department of Mathematics, Rowan University \\
    Cristina Tortora \\
    Department of Mathematics and Statistics, San Jos\'e State University \\
    and \\
    Antonio Punzo \\
    Department of Economics and Business, University of Catania}
  \maketitle
} \fi

\if0\anon
{
  \bigskip
  \bigskip
  \bigskip
  \begin{center}
    {\LARGE\bf Title}
\end{center}
  \medskip
} \fi

\bigskip
\begin{abstract}
Missing values, atypical observations, and heterogeneity across latent groups are common sources of complexity in regression data. The contaminated Gaussian cluster-weighted model (CG-CWM) provides a natural framework for handling atypical observations, including outliers and leverage points, in model-based clustering. We extend the CG-CWM to data with missing-at-random (MAR) values in both the response and covariate spaces. The proposed model provides clustering in regression analysis while distinguishing typical observations, outliers, and good and bad leverage points. 
By treating covariates as random, the model preserves assignment dependence, allowing them to contribute directly to cluster formation. Maximum likelihood estimation is performed through an expectation-conditional maximization (ECM) algorithm that accounts for four sources of incomplete information: missing responses and covariates, unknown component memberships, and latent contamination indicators. Conditional on these indicators, the joint distribution of responses and covariates is multivariate Gaussian, yielding closed-form conditional distributions for missing values and incorporating missingness uncertainty directly into parameter updates. Thus, missing values are handled within model fitting rather than by preliminary imputation. 
The framework provides clustering, clusterwise regression, model-based treatment of MAR values, and detection of atypical observations. 
Performance is assessed through numerical studies under varying levels of contamination and missingness patterns, and a real data application.
\end{abstract}

\noindent%
{\it Keywords:} Cluster-weighted models; Contaminated Gaussian distribution; Missing at random; Model-based clustering; 
%Robust regression; 
ECM algorithm.% 3 to 6 keywords, that do not appear in the title
\vfill

\newpage
\spacingset{1.8} % DON'T change the spacing!

\section{Introduction}
\label{sec:introduction}

Finite mixture models provide a flexible probabilistic framework for density estimation, clustering, and classification, with the mixture components commonly interpreted as latent groups in a heterogeneous population \citep{titterington1985statistical, mclachlan2000finite, mcnicholas2016mixture}. However, in many applications, the variables under study have intrinsically different roles. In particular, a random vector
can often be naturally partitioned into a response vector $\bY$ and a covariate vector $\bX$, and the relationship between the two may vary across latent groups. In such settings, clustering the joint vector $(\bX, \bY)$ without explicitly accounting for the within-group dependence of $\bY$ on $\bX$ may obscure an important part of the underlying structure.

Mixtures of regression models provide a natural framework for accommodating cluster-specific relationships between responses and covariates \citep{desarbo1988maximum, fruhwirth2006finite}. An important distinction within this class concerns whether the covariates are regarded as fixed or random. Mixtures of regressions with fixed covariates assume \emph{assignment independence}, meaning that cluster membership is independent of the covariates. This assumption can be restrictive when the distribution of the covariates itself contains information about the latent groups \citep{hennig2000identifiablity}. Cluster-weighted models (CWMs; \citealp{gershenfeld_nonlinear_1997, wedel_mixture_1995}) instead treat the covariates as random and model the within-component joint density through the factorization $p(\bx, \by) = p(\by \mid \bx) \, p(\bx)$. Allowing the marginal distribution of $\bX$ to vary across components induces \emph{assignment dependence}, so that both the distribution of the covariates and the local regression relationship contribute to cluster formation. Gaussian CWMs and related formulations have therefore become useful tools for model-based clustering of regression data \citep{dang2017multivariate}. % ingrassia_local_2012,

A second complication arises because real data frequently contain
observations that depart from the bulk of the data. This issue is
particularly important in regression analysis, where atypical observations
can occur in either the response or the covariate space. Observations that
are atypical with respect to the conditional distribution of $\bY$ given
$\bX=\bx$ are commonly referred to as outliers, whereas observations that
are atypical with respect to the distribution of $\bX$ are leverage points.
The latter can be further distinguished into good leverage points, which
remain consistent with the local regression relationship, and bad leverage
points, which are atypical in both the covariate and conditional-response
spaces \citep{rousseeuw1987robust,rousseeuw1990unmasking}. The distinction
is important because these different types of observations can have
markedly different effects on regression estimation and clustering. Figure \ref{fig:leverage-outliers} shows an illustrative example of a bivariate dataset to clarify the distinction. Table \ref{tab:leverage-outliers} shows the corresponding schematization.

\begin{figure}[!t]
    \centering
\includegraphics[width=0.5\linewidth]{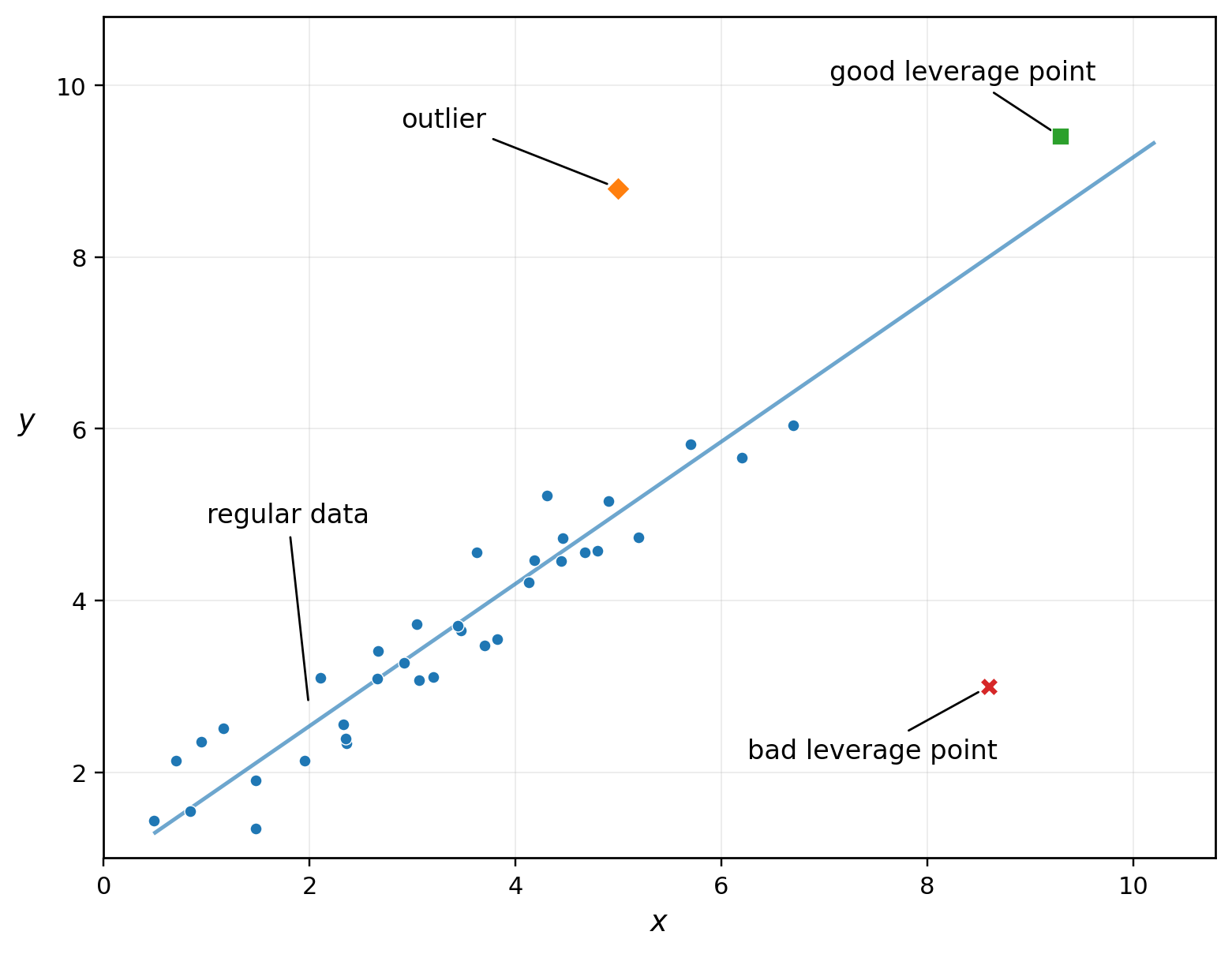}
\caption{Illustrative example of the categorization of points in a regression analysis}
\label{fig:leverage-outliers}
\end{figure}

\begin{table}[!t]
\centering
\caption{Categorization of points in a regression analysis.}
\label{tab:leverage-outliers}
\begin{tabular}{lcc}
\hline
& \multicolumn{2}{c}{Leverage (on $\bX$)} \\
Outlier (on $\bY \mid \bx$) & Yes & No \\
\hline
Yes & Bad leverage & Outlier \\
No  & Good leverage & Typical (bulk of the data) \\
\hline
\end{tabular}
\end{table}

To accommodate mildly atypical observations in this setting,
\citet{punzo_robust_2017} introduced the contaminated Gaussian
cluster-weighted model (CG-CWM). Their proposal replaces the Gaussian
distributions for both $\bX$ and $\bY\mid\bX=\bx$ with contaminated
Gaussian distributions. The contaminated Gaussian distribution
\citep{tukey_survey_1960,aitkin_mixture_1980} represents the data through
two Gaussian distributions having a common mean but different covariance
matrices, with the covariance matrix associated with the contaminating
distribution obtained by inflating that of the reference distribution.
Within each CWM component, separate contamination mechanisms for
$\bX$ and $\bY\mid\bX=\bx$ make it possible to obtain, after fitting the
model, a finer classification of observations as typical points, outliers,
good leverage points, or bad leverage points. Importantly, the proportions
and degrees of contamination are estimated from the data rather than fixed
in advance. Thus, the CG-CWM simultaneously provides model-based
clustering, clusterwise regression, and automatic detection of mildly
atypical observations.

Missing values constitute another pervasive difficulty in statistical
analysis. They occur in a wide range of applications and require particular
care because the validity of an analysis depends on the mechanism generating
the missingness \citep{rubin1976inference,little_statistical_2020}. In this
paper, we focus on the missing-at-random (MAR) mechanism, under which the
probability that a value is missing may depend on the observed data but not,
conditional on those data, on the unobserved values. Under appropriate
distinctness and parameter-space assumptions, the missing-data mechanism is
ignorable for likelihood-based inference under MAR
\citep{rubin1976inference,little_statistical_2020}. Maximum likelihood
estimation is therefore particularly attractive because it allows all the
available information to contribute to estimation without restricting the
analysis to complete observations.

The treatment of missing values becomes especially challenging in a
regression setting when missingness can occur in both $\bY$ and $\bX$.
Discarding incomplete observations can result in a substantial loss of
information and, except under more restrictive missingness conditions, can
lead to biased inference. Moreover, a preliminary single-imputation step
followed by an analysis that treats the imputed values as observed fails to
propagate the uncertainty associated with the missing entries. A likelihood
approach instead regards the missing entries as latent quantities and
integrates their uncertainty directly into model fitting, naturally leading
to the expectation-maximization (EM) framework
\citep{dempster1977maximum, mclachlan_em_2008}.

Several mixture-model approaches have been developed for incomplete data.
For model-based clustering without a response--covariate distinction,
Gaussian mixtures with missing values have been considered by
\citet{ghahramani1994supervised}, with subsequent extensions involving
heavier-tailed and more flexible component distributions
\citep{wang2004multivariate, wei2019mixtures, tong2022model, tong2024missing, pillay2026clustering}. Much less attention has been devoted to mixtures of regression models when missing values can occur simultaneously among the responses and the covariates. This distinction is consequential: with fixed covariates, missingness in $\bX$ falls outside the usual conditional regression formulation, whereas a random-covariate model provides a joint probabilistic model for both variable spaces. Recently, \citet{tong2026newlookgaussianmixtures} exploited this property to develop maximum likelihood estimation for multivariate Gaussian regression with multiple random covariates, and its model-based clustering extension, when MAR values occur in either $\bY$, $\bX$, or both.

The purpose of the present paper is to bring together the latter treatment of MAR data and the regression-clustering framework of
\citet{punzo_robust_2017}. Specifically, we introduce the contaminated Gaussian cluster-weighted model with missing-at-random values, allowing arbitrary MAR patterns in both the response and covariate spaces. This extension is not immediate. In the complete-data CG-CWM, in addition to component membership, two latent indicators identify
whether an observation belongs to the reference or contaminated distribution in $\bX$ and $\bY\mid \bX = \bx$, respectively. With incomplete data, the missing entries themselves constitute an additional source of unobserved information. Parameter estimation must therefore account
simultaneously for missing values, unknown cluster memberships, and the two latent contamination indicators.

A key result underlying our methodology is that, although the joint
distribution of $(\bX,\bY)$ within a CG-CWM component is
not itself a contaminated Gaussian distribution, it is multivariate
Gaussian conditional on the two contamination indicators. The corresponding
joint, marginal, and conditional Gaussian distributions can consequently
be exploited to derive the distributions of the missing covariates and
responses given the observed data. These results provide the conditional
moments required for likelihood-based estimation and, importantly, allow
the covariance terms associated with the uncertainty of the missing values to be retained in the estimation procedure. The conditional distribution of a missing value depends on both the component and contamination states; hence, the implicit model-based imputation adapts not only to the cluster to which an observation is likely to belong but also to whether it behaves as a typical point, an outlier, or a good/bad leverage point.

Maximum likelihood estimation is implemented using the expectation-conditional maximization (ECM) algorithm of \citet{meng93}, an extension of the EM algorithm in which the maximization step is replaced by a sequence of simpler conditional maximization steps. 
The resulting procedure retains the main advantages of the original CG-CWM: clusterwise regression and automatic identification of atypical observations, while extending them to incomplete data under MAR. 
Thus, the proposed framework simultaneously addresses four interconnected tasks: clustering heterogeneous regression data, estimating cluster-specific regression relationships, accommodating MAR values in responses and covariates, and detecting outliers and leverage points. In particular, missing-data treatment and atypical-point detection are performed jointly within a single probabilistic model rather than as separate preprocessing
and analysis stages.

The remainder of the paper is organized as follows. 
Section~\ref{sec:background} reviews the contaminated Gaussian distribution (Section~\ref{subsec:contaminated_gaussian}), the contaminated Gaussian cluster-weighted model (Section~\ref{sec:Contaminated Gaussian Cluster Weighted Models}), and the MAR framework (Section~\ref{sec:Missing Values}). 
Section~\ref{sec:CG-CWM-MAR} introduces the CG-CWM with MAR values and develops the distributional results required for maximum likelihood estimation, followed by the proposed ECM algorithm.
Section~\ref{sec:compdet} discusses operational aspects of the methodology, including
initialization, convergence, model selection, classification, and
model-based imputation. 
Section~\ref{sec:real data} illustrates the methodology on real data. 
Finally, Section~\ref{sec:conclusions} concludes the paper and discusses possible directions for future research. In addition to these main sections, Section $2$ of the Supplementary Material also evaluates the performance of the proposed approach through simulation studies.

\section{Background}
\label{sec:background}

This section provides an overview of the essential background for the paper. Specifically, we first introduce the contaminated Gaussian distribution and its use in cluster-weighted models. We then briefly discuss the treatment of missing values.

\subsection{Contaminated Gaussian Distribution}
\label{subsec:contaminated_gaussian}

Let $\bT$ be a $d$-variate random vector with real-valued components. 
$\bT$ is said to follow a contaminated Gaussian distribution (CG; \citealp{tukey_survey_1960}) with mean vector $\bmu$, scale matrix $\bSigma$, proportion of regular data $\alpha \in (0,1)$, and degree of contamination $\eta>1$, denoted by 
$\bT \sim \mathcal{CG}_d(\bmu, \bSigma, \alpha, \eta)$, if the probability density function (pdf) can be written as
\begin{equation}\label{eq:CG}
    f_{\text{CG}}(\bt;\bmu, \bSigma, \alpha, \eta)= \alpha f_{\text{G}}(\bt; \bmu, \bSigma)+(1-\alpha)f_{\text{G}}(\bt; \bmu, \eta\bSigma),
\end{equation}
where $f_{\text{G}}(\cdot; \bmu, \bSigma)$ is the pdf of a $d$-variate Gaussian distribution with mean vector $\bmu$ and covariance matrix $\bSigma$.
Thus, the CG distribution can be viewed as a two-component mixture in which one component, with probability $\alpha$, serves as the reference model for the regular observations—also referred to as ``good'' in the nomenclature of \citet{aitkin_mixture_1980}—while the other component, the contaminating one, with probability $1-\alpha$, represents anomalies (or bad observations). 
The two components share the same mean vector $\bmu$, but the component representing the anomalies has an inflated covariance matrix $\eta\bSigma$.

For identifiability, no additional constraints on the contamination parameters $\alpha$ and $\eta$ are required; see \citet{melnykov2025contaminated} and the Supplementary Material of \citet{lim2025heckmanCN}. 
However, for interpretability in robust statistics, it is sometimes assumed that $\alpha>0.5$, reflecting the idea that the majority of the observations correspond to regular data. 
In the contaminated Gaussian framework, anomalies are already defined relative to the reference distribution assumed for the regular data and are characterized by a larger variability. 
Under this conventional robust interpretation, the anomalous component is therefore regarded as a minority component. For a discussion of the notion of a reference model, see \citet{davies1993identification} and \citet{tomarchio2020dichotomous}. We impose the constraint $\alpha>0.5$ to retain this interpretation, although it is not required for identifiability and may be removed when a more flexible parameterization is desired.

Finally, for the purposes of our proposal, it is useful to note that if $\bT \sim \mathcal{CG}_d(\bmu, \bSigma, \alpha, \eta)$, then $\bT$ admits the following hierarchical representation
\begin{align}
W &\sim \mathcal{B}(\alpha) \quad \text{and} \quad \bT \mid W=w \sim \mathcal{G}_d\left(\bmu, \left(w+ \frac{1-w}{\eta}\right)^{-1}\bSigma\right), \nonumber
\end{align}
where $\mathcal{B}(\alpha)$ denotes a Bernoulli distribution with probability of success $\alpha$, and $\mathcal{G}_d(\bmu,\bSigma)$ denotes a $d$-variate Gaussian distribution with mean vector $\bmu$ and covariance matrix $\bSigma$.

% \cite{punzo_parsimonious_2016} proposed to use the CG within model-based clustering, i.e. they propose a finite mixture of $G$ CG (MCG) distributions, formally the pdf of a MCG can be written as

% \begin{equation}\label{eq:MCG}
%     f_{\text{MCG}}(\bt;\bvartheta)=\sum_{j=1}^K \pi_j f_{\text{CG}}(\bt;\bmu_j, \bSigma_j, \alpha_j, \eta_j),
% \end{equation}
% where $\pi_j \in (0,1)$ is the mixing proportion of the $g$th component with $\sum_{j=1}^K\pi_j=1$, $\bpi=\{\pi_j\}_{j=1}^K$, $\bvartheta=\{\bvartheta_j\}_{j=1}^K$, and $\bvartheta_j= \{ \bmu_j, \bSigma_j, \alpha_j, \eta_j \}$.

\subsection{Contaminated Gaussian Cluster Weighted Models}
\label{sec:Contaminated Gaussian Cluster Weighted Models}

Despite their effectiveness, the CG models mentioned previously present two main limitations. First, they do not account for potential heterogeneity in the data arising from the presence of clusters. Second, they fail to accommodate many applied settings in which the matrix $\bT$ consists of responses $\bY$ and covariates $\bX$, with dimensions $d_{\bY}$ and $d_{\bX}$, respectively. This partitioning often reflects either the intrinsic meaning of the variables or the researcher’s specific analytical objectives. Preserving this structure is therefore important for conducting an effective analysis.
In such situations, mixture models provide a natural extension, as the functional dependence of $\bY$ on $\bX=\bx$ within each mixture component can help explain and refine the clustering structure. To incorporate external information contained in random covariates, clusterwise regression with random covariates—also known as the cluster-weighted model \citep{gershenfeld_nonlinear_1997} or the saturated mixture regression model \citep{wedel_mixture_1995}—offers an important model-based clustering framework.
In this model, the joint density $p(\bx,\by;\bvartheta)$ of the pair $(\bX,\bY)$ is expressed as a mixture of $k$ components in which, within each component, the joint density factorizes into the product of the conditional density of the responses given the covariates and the marginal density of the covariates: 
\begin{align*}
    p (\bx, \by; \bvartheta) = \sum_{j = 1}^k \pi_j p (\by \mid \bx; \bvartheta_{\bY|j}) \, p (\bx; \bvartheta_{\bX|j}),
\end{align*}
 where $\pi_j \in (0,1)$ is the mixing proportion of the $j$th component such that $\sum_{j=1}^k\pi_j=1$, and $\bvartheta=\{\pi_j,\bvartheta_{\bX|j}, \bvartheta_{\bY|j}\}_{j=1}^k$. The number of clusters is assumed to be equal to the number of components $k$.
This formulation allows cluster membership to depend on the covariates $\bX$, since the distribution of $\bX$ is allowed to vary across mixture components.

\citet{punzo_robust_2017} proposed the contaminated Gaussian cluster weighted model (CG-CWM) for a random sample of $n$ covariates-response pairs $\{ \bX_i, \bY_i \}_{i = 1}^n$. Let $Z_{ij}$, for $j = 1, \dots, k$,  denote the cluster-membership indicator such that $Z_{ij} = 1$ if observation $i$ belongs to cluster $j$ and $Z_{ij} = 0$ otherwise. The CG-CWM assumes the following distributional relationships within each cluster
\begin{align*}
    \bX_i \mid Z_{ij} = 1 & \sim \mathcal{CG}_{d_\bX} \left( \bmu_{\bX \mid j}, \bSigma_{\bX \mid j}, \alpha_{\bX \mid j}, \eta_{\bX \mid j} \right) \\ 
    \text{and} \quad \bY_i \mid \bx_i, Z_{ij} = 1 & \sim \mathcal{CG}_{d_\bY} \left( \bmu_\bY (\bx_i; \bbeta_j), \bSigma_{\bY \mid j}, \alpha_{\bY \mid j}, \eta_{\bY \mid j} \right),
\end{align*}
where the local conditional mean of $\bY_i \mid \bx_i$ in the $j$th component is given by
\begin{align*}
    \underset{d_\bY \times 1}{\bmu_\bY (\bx_i; \bbeta_j)} &= \bbeta^\top_j \begin{bmatrix} 1 \\ \bx_i \end{bmatrix} =  \bb_{0j} + \bB_j \bx_i, \quad \text{with } \underset{d_\bY \times (1 + d_\bX)}{\bbeta^\top_j} = \begin{bmatrix}
        \underset{d_\bY \times 1}{\bb_{0j}}  & \underset{d_\bY \times d_\bX}{\bB_j}
    \end{bmatrix}.
\end{align*}
In this expression, $\bbeta_j$ is a $(1+d_{\bX})\times d_{\bY}$ matrix of regression coefficients consisting of the vector of intercepts $\bb_{0j}$ and the matrix of slopes $\bB_j$. Probabilistically, the joint density function of covariates $\bX_i$ and response $\bY_i$ is given by
\begin{align*}
    p_\text{CG-CWM} (\bx_i, \by_i; \bvartheta) = \sum_{j = 1}^k \pi_j f_\text{CG} \left( \by_i; \bmu_{\bY \mid j} (\bx_i; \bbeta_j), \bSigma_{\bY \mid j}, \alpha_{\bY \mid j}, \eta_{\bY \mid j} \right) f_\text{CG} \left( \bx_i; \bmu_{\bX \mid j}, \bSigma_{\bX \mid j}, \alpha_{\bX \mid j}, \eta_{\bX \mid j} \right).
\end{align*}
For a discussion on the identifiability of the model, see \citet{punzo_robust_2017}.
% For a discussion on the identifiability of the model in \eqref{eq:jdfcg-cw} see \citet{punzo_robust_2017}.
% \textcolor{magenta}{Tailor this part to the CG-CWM: \citet{dang2017multivariate} also established conditions for identifiability for G-CWR. Discussion of the identifiability of various G-CWR models can be found in \citet{hennig2000identifiablity} and \citet{ingrassia_local_2012}.}

\subsection{Missing Values}
\label{sec:Missing Values}

In a classical regression context, when missing values are present, each pair $(\bX_i, \bY_i)$ can be decomposed into observed and missing sub-vectors:  $\bX^o_i$ of dimension $d_{\bX_i}^o$, $\bX^m_i$ of dimension $d_{\bX_i}^m$, $\bY^o_i$ of dimension $d_{\bY_i}^{o}$, and $\bY^m_i$ of dimension $d_{\bY_i}^{m}$. Specifically,
\begin{align*}
    \bX_i = \begin{bmatrix}
\underset{d_{\bX_i}^o \times 1}{\bX^o_i} \\
\underset{d_{\bX_i}^m \times 1}{\bX^m_i}
\end{bmatrix}, \qquad \bY_i = \begin{bmatrix}
\underset{d_{\bY_i}^o \times 1}{\bY^o_i} \\
\underset{d_{\bY_i}^m \times 1}{\bY^m_i}
\end{bmatrix}.
\end{align*}
The superscripts $o$ and $m$ do not imply a common missingness pattern across observations; they are used in place of $o_i$ and $m_i$ for simplicity of notation. In this paper, we assume a missing at random (MAR) mechanism. Let $\bR_i$ denote the missingness indicator of dimension $(d_{\bX} + d_{\bY}) \times 1$ associated with $(\bX_i,\bY_i)$. Under the MAR mechanism, $P (\bR_i \mid \bX_i^o, \bX_i^m, \bY_i^o, \bY_i^m) = P(\bR_i \mid \bX_i^o, \bY_i^o)$. In other words, the probability of missingness may depend on the observed values $\bX_i^o$ and $\bY_i^o$, but not on the unobserved components $\bX_i^m$ and $\bY_i^m$ \citep{little_statistical_2020}.

\section{Contaminated Gaussian Cluster-Weighted Models with Missing-at-Random Values}
\label{sec:CG-CWM-MAR}

% To achieve maximum likelihood (ML) parameter estimates of $\bvartheta$, we can consider the observed likelihood function
% \begin{align*}
%     L (\bvartheta) &= \prod_{i = 1}^n\sum_{j = 1}^k \pi_j f_\text{CG} \left( \begin{bmatrix} \by^o_i \\ \by^m_i \end{bmatrix}; \bmu_{\bY \mid j} \left(\begin{bmatrix} \bx^o_i \\ \bx^m_i \end{bmatrix}; \bbeta_j\right), \bSigma_{\bY \mid j}, \alpha_{\bY \mid j}, \eta_{\bY \mid j} \right) f_\text{CG} \left( \begin{bmatrix} \bx^o_i \\ \bx^m_i \end{bmatrix}; \bmu_{\bX \mid j}, \bSigma_{\bX \mid j}, \alpha_{\bX \mid j}, \eta_{\bX \mid j} \right).
% \end{align*}

The goal of the proposed model is to describe heterogeneous data composed of responses $\bY$ and covariates $\bX$, while allowing for the presence of outliers and/or leverage points when the data contain MAR values. Maximum likelihood (ML) estimation of $\bvartheta$ can be obtained using the expectation-maximization (EM) algorithm introduced by \citet{dempster1977maximum} or one of its extensions \citep{mclachlan_em_2008}. Generally, an EM procedure concerns the complete-data log-likelihood, which accounts for observed values, missing values, and latent variables. It then iteratively alternates between an expectation (E) step and a maximization (M) step until convergence. The E-step computes the conditional expectation of the complete-data log-likelihood given the observed values and the current estimates of $\bvartheta$, while the M-step updates $\bvartheta$ to maximize that conditional expectation.

For the proposed model, we adopt the expectation-conditional maximization (ECM) algorithm of \citet{meng93}, an extension of the EM algorithm in which the M-step is replaced by a sequence of simpler conditional-maximization (CM) steps. For the proposed model, the complete-data formulation accounts for four sources of incompleteness:
\begin{itemize}
    
\item The first source is given by the missing values in the covariates and responses, $\bX^m_i$ and $\bY^m_i$, respectively, for $i = 1, \dots, n$.

\item The second source is the unknown component membership, previously defined as $Z_{ij}$, with $i = 1, \ldots, n$ and $j=1, \ldots, k$.

\item The remaining two sources arise from the fact that, for each observation, it is unknown whether the observation is a leverage point and/or an outlier with respect to component $j$; see Table~\ref{tab:leverage-outliers} for different types of data points. To denote these two sources of incompleteness, we use $U_{ij}$ and $V_{ij}$, respectively. Here, $V_{ij} = 1$ if $(\bX_i, \bY_i)$ is not a leverage point in component $j$ and $V_{ij} = 0$ otherwise. On the other hand, $U_{ij} = 1$ if $(\bX_i, \bY_i)$ is not an outlier in component $j$ and $U_{ij} = 0$ otherwise.
\end{itemize}

Accordingly, the complete-data log-likelihood is given by
% \begin{equation*}%\label{eq:cl}
% \begin{aligned}
% L_c(\boldsymbol{\vartheta})
% &=
% \prod_{i=1}^{n}\prod_{j=1}^{k}
% \Bigg\{
% \pi_j
% \left[
% \alpha_{\bX\mid j} f_\text{G} \left( \begin{bmatrix} \bx^o_i \\ \bx^m_i \end{bmatrix}; \bmu_{\bX \mid j}, \bSigma_{\bX \mid j} \right)
% \right]^{v_{ij}}
% \left[
% (1-\alpha_{\bX\mid j})
%  f_\text{G} \left( \begin{bmatrix} \bx^o_i \\ \bx^m_i \end{bmatrix}; \bmu_{\bX \mid j}, \eta_{\bX \mid j}\bSigma_{\bX \mid j} \right)
% \right]^{1-v_{ij}}
% \\
% &\times
% \left[
% \alpha_{\bY\mid j}
% f_\text{G} \left( \begin{bmatrix} \by^o_i \\ \by^m_i \end{bmatrix}; \bmu_{\bY \mid j} \left(\begin{bmatrix} \bx^o_i \\ \bx^m_i \end{bmatrix}; \bbeta_j\right), \bSigma_{\bY \mid j} \right)
% \right]^{u_{ij}}
% \left[
% (1-\alpha_{\bY\mid j})
% f_\text{G} \left( \begin{bmatrix} \by^o_i \\ \by^m_i \end{bmatrix}; \bmu_{\bY \mid j} \left(\begin{bmatrix} \bx^o_i \\ \bx^m_i \end{bmatrix}; \bbeta_j\right), \eta_{\bY \mid j} \bSigma_{\bY \mid j}\right)
% \right]^{1-u_{ij}}
% \Bigg\}^{z_{ij}} .
% \end{aligned}
% \end{equation*}
% The corresponding complete-data log-likelihood is thus
\begin{align}\label{eq:lcl}
    & l_c (\bvartheta) \\
    &= \sum_{i = 1}^n \sum_{j = 1}^k z_{ij} \log \pi_j \nonumber \\ 
    & + \sum_{i = 1}^n \sum_{j = 1}^k z_{ij} \left[ v_{ij} \log \alpha_{\bX \mid j} + (1 - v_{ij}) \log (1 - \alpha_{\bX \mid j}) + u_{ij} \log \alpha_{\bY \mid j} + (1 - u_{ij}) \log (1 - \alpha_{\bY \mid j}) \right]  \nonumber \\
    & -\dfrac{1}{2} \sum_{i = 1}^n \sum_{j = 1}^k z_{ij} \left[ \log \lvert \bSigma_{\bX \mid j} \rvert + d_\bX (1 - v_{ij}) \log \eta_{\bX \mid j} + \left( v_{ij} + \dfrac{1 - v_{ij}}{\eta_{\bX \mid j}} \right) \delta \left( \begin{bmatrix} \bx^o_i \\ \bx^m_i \end{bmatrix}, \bmu_{\bX \mid j}; \bSigma_{\bX \mid j} \right) \right] \nonumber \\
    & -\dfrac{1}{2} \sum_{i = 1}^n \sum_{j = 1}^k z_{ij} \left[ \log \lvert \bSigma_{\bY \mid j} \rvert + d_{\bY} (1 - u_{ij}) \log \eta_{\bY \mid j} + \left( u_{ij} + \dfrac{1 - u_{ij}}{\eta_{\bY \mid j}} \right) \delta \left( \begin{bmatrix} \by^o_i \\ \by^m_i \end{bmatrix}, \bmu_\bY \left( \begin{bmatrix} \bx^o_i \\ \bx^m_i \end{bmatrix}; \bbeta_j \right); \bSigma_{\bY \mid j} \right) \right], \nonumber
\end{align}
where $\delta (\ba; \bmu, \bSigma) = (\ba - \bmu)^\top \bSigma^{-1} (\ba - \bmu)$ is the squared Mahalanobis distance.
% \begin{align*}
%     \delta \left( \begin{bmatrix} \bx^o_i \\ \bx^m_i \end{bmatrix}, \bmu_{\bX \mid j}; \bSigma_{\bX \mid j} \right) = \left( \begin{bmatrix} \bx^o_i \\ \bx^m_i \end{bmatrix} - \begin{bmatrix} \bmu^o_{\bX \mid j} \\ \bmu^m_{\bX \mid j} \end{bmatrix} \right)^\top \bSigma^{-1}_{\bX \mid j} \left( \begin{bmatrix} \bx^o_i \\ \bx^m_i \end{bmatrix} - \begin{bmatrix} \bmu^o_{\bX \mid j} \\ \bmu^m_{\bX \mid j} \end{bmatrix} \right)
% \end{align*}
% and
% \begin{align*}
%     \delta \left( \begin{bmatrix} \by^o_i \\ \by^m_i \end{bmatrix}, \bmu_\bY \left( \begin{bmatrix} \bx^o_i \\ \bx^m_i \end{bmatrix}; \bbeta_j \right); \bSigma_{\bY \mid j} \right) &= \left( \begin{bmatrix} \by^o_i \\ \by^m_i \end{bmatrix} - \bbeta^\top_j \begin{bmatrix} 1 \\ \bx^o_i \\ \bx^m_i \end{bmatrix} \right)^\top \bSigma^{-1}_{\bY \mid j} \left( \begin{bmatrix} \by^o_i \\ \by^m_i \end{bmatrix} - \bbeta^\top_j \begin{bmatrix} 1 \\ \bx^o_i \\ \bx^m_i \end{bmatrix} \right).
% \end{align*}

When implementing the ECM algorithm, a major challenge arises in the E-step, where several conditional expectations must be computed. Specifically, although the random vectors $\bX_i \mid Z_{ij}=1$ and $\bY_i \mid \bx_i, Z_{ij}=1$ each follow a contaminated Gaussian distribution, their joint distribution is not contaminated Gaussian. Consequently, existing results for the contaminated Gaussian distribution do not directly yield the joint, marginal, and conditional distributions of $\bX_i^o$, $\bX_i^m$, $\bY_i^o$, and $\bY_i^m$. Therefore, before proceeding with parameter estimation, we first establish several distributional results required for the E-step. These results not only enable the closed-form computation of the required conditional expectations but are also essential for evaluating the covariance terms that account for the additional variability induced by missingness in both the covariates and responses. The following subsections establish these distributional properties, derive the covariance terms and the conditional expectations of the squared Mahalanobis distance terms required for the E-step, and then describe the ECM algorithm.

\subsection{Distributional Properties}
\label{sec:dist_res}

Considering the contaminated Gaussian random vectors $\bX_i \mid Z_{ij} = 1$ and $\bY_i \mid \bx_i, Z_{ij} = 1$, some relationships to the multivariate Gaussian distribution can be established given $V_{ij} = v_{ij}$ and $U_{ij} = u_{ij}$, where $v_{ij}, u_{ij} \in \{ 1, 0 \}$. This conditioning is important because, without conditioning on the contamination indicators, the joint distribution of $(\bX_i,\bY_i)\mid Z_{ij}=1$ is complicated. First, note that
\begin{align*}
    \bX_i \mid Z_{ij} = 1, V_{ij} = v_{ij}, U_{ij} = u_{ij} & \sim \mathcal{G}_{d_\bX} \left( \bmu_{\bX \mid j}, \left( v_{ij} + \dfrac{1 - v_{ij}}{\eta_{\bX \mid j}} \right)^{-1} \bSigma_{\bX \mid j},  \right) \\ 
    \text{and} \quad \bY_i \mid \bx_i, Z_{ij} = 1, V_{ij} = v_{ij}, U_{ij} = u_{ij} & \sim \mathcal{G}_{d_\bY} \left( \bmu_\bY (\bx_i; \bbeta_j), \left( u_{ij} + \dfrac{1 - u_{ij}}{\eta_{\bY \mid j}} \right) \bSigma_{\bY \mid j} \right).
\end{align*}
Following \citet[Ch.~2.3]{bishop2006pattern}, conditional on $Z_{ij} = 1$, $V_{ij} = v_{ij}$ and $U_{ij} = u_{ij}$, the joint distribution of $\bX_i$ and $\bY_i$ is given by
\begin{align*}
    \begin{bmatrix} \bX_i \\ \bY_i \end{bmatrix} \mid Z_{ij} = 1, V_{ij} = v_{ij}, U_{ij} = u_{ij} \sim \mathcal{G}_{d_\bX + d_\bY} \left( \begin{bmatrix} \bmu_{\bX \mid j} \\ \tilde{\bmu}_{\bY \mid j} \end{bmatrix}, \begin{bmatrix} \bSigma_{\bX \bX \mid j} & \bSigma_{\bX \bY \mid j} \\ \bSigma_{\bY \bX \mid j} & \bSigma_{\bY \bY \mid j} \end{bmatrix} \right),
\end{align*}
where
\begin{align*}
    \tilde{\bmu}_{\bY \mid j} &= \bbeta^\top_j \begin{bmatrix} 1 \\ \bmu_{\bX \mid j} \end{bmatrix} = \bb_{0j} + \bB_j \bmu_{\bX \mid j}, \quad \bSigma_{\bX \bX \mid j} = \left( v_{ij} + \dfrac{1 - v_{ij}}{\eta_{\bX \mid j}} \right)^{-1} \bSigma_{\bX \mid j}, \\
    \bSigma_{\bX \bY \mid j} &= \left( v_{ij} + \dfrac{1 - v_{ij}}{\eta_{\bX \mid j}} \right)^{-1} \bSigma_{\bX \mid j} \bB^\top_j, \quad \bSigma_{\bY \bX \mid j} = \left( v_{ij} + \dfrac{1 - v_{ij}}{\eta_{\bX \mid j}} \right)^{-1} \bB_j \bSigma_{\bX \mid j}, \\
    \text{and} \quad \bSigma_{\bY \bY \mid j} &=  \left( v_{ij} + \dfrac{1 - v_{ij}}{\eta_{\bX \mid j}} \right)^{-1} \bB_j \bSigma_{\bX \mid j} \bB^\top_j + \left( u_{ij} + \dfrac{1 - u_{ij}}{\eta_{\bY \mid j}} \right)^{-1} \bSigma_{\bY \mid j}.
\end{align*}
Since the covariates $\bX_i$ and responses $\bY_i$ can be decomposed into observed and missing sub-vectors, we have the joint distribution
\begin{align*}
    \begin{bmatrix} \bX^o_i \\ \bX^m_i \\ \bY^o_i \\ \bY^m_i \end{bmatrix} \mid Z_{ij} = 1, V_{ij} = v_{ij}, U_{ij} = u_{ij} \sim \mathcal{G}_{d_\bX + d_\bY} \left( \begin{bmatrix} \bmu^o_{\bX \mid j} \\ \bmu^m_{\bX \mid j} \\ \tilde{\bmu}^o_{\bY \mid j} \\ \tilde{\bmu}^m_{\bY \mid j} \end{bmatrix}, \begin{bmatrix}
\bSigma^{oo}_{\bX \bX \mid j} & \bSigma^{om}_{\bX \bX \mid j} & \bSigma^{oo}_{\bX \bY \mid j} & \bSigma^{om}_{\bX \bY \mid j} \\
\bSigma^{mo}_{\bX \bX \mid j} & \bSigma^{mm}_{\bX \bX \mid j} & \bSigma^{mo}_{\bX \bY \mid j} & \bSigma^{mm}_{\bX \bY \mid j} \\
\bSigma^{oo}_{\bY \bX \mid j} & \bSigma^{om}_{\bY \bX \mid j} & \bSigma^{oo}_{\bY \bY \mid j} & \bSigma^{om}_{\bY \bY \mid j} \\
\bSigma^{mo}_{\bY \bX \mid j} & \bSigma^{mm}_{\bY \bX \mid j} & \bSigma^{mo}_{\bY \bY \mid j} & \bSigma^{mm}_{\bY \bY \mid j} \\
\end{bmatrix} \right),
\end{align*}
where the parameters are also decomposed as follows
\begin{align*}
    \bmu_{\bX \mid j} &= \begin{bmatrix} \bmu^o_{\bX \mid j} \\[1ex] \bmu^m_{\bX \mid j} \end{bmatrix}, \quad \tilde{\bmu}_{\bY \mid j} =  \begin{bmatrix} \tilde{\bmu}^o_{\bY \mid j} \\[1ex] \tilde{\bmu}^m_{\bY \mid j}\end{bmatrix}, \quad \bSigma_{\bX \bX \mid j} = \begin{bmatrix} \bSigma^{oo}_{\bX \bX \mid j} & \bSigma^{om}_{\bX \bX \mid j} \\[1ex] \bSigma^{mo}_{\bX \bX \mid j} & \bSigma^{mm}_{\bX \bX \mid j} \end{bmatrix}, \quad \bSigma_{\bX \bY \mid j} = \begin{bmatrix} \bSigma^{oo}_{\bX \bY \mid j} & \bSigma^{om}_{\bX \bY \mid j} \\[1ex] \bSigma^{mo}_{\bX \bY \mid j} & \bSigma^{mm}_{\bX \bY \mid j} \end{bmatrix}, \\[2ex]
    \bSigma_{\bY \bX \mid j} &= \begin{bmatrix} \bSigma^{oo}_{\bY \bX \mid j} & \bSigma^{om}_{\bY \bX \mid j} \\[1ex] \bSigma^{mo}_{\bY \bX \mid j} & \bSigma^{mm}_{\bY \bX \mid j} \end{bmatrix}, \quad \text{and} \quad \bSigma_{\bY \bY \mid j} = \begin{bmatrix} \bSigma^{oo}_{\bY \bY \mid j} & \bSigma^{om}_{\bY \bY \mid j} \\[1ex] \bSigma^{mo}_{\bY \bY \mid j} & \bSigma^{mm}_{\bY \bY \mid j} \end{bmatrix}.
\end{align*}
This joint allows us to leverage the attractive properties of the multivariate Gaussian distribution, as detailed in \citet[Ch.~3]{MarKen79} and \citet[Ch.~4]{johnson_applied_2007}. In particular, the following distributional results can be established. 

\begin{proposition}\label{res1}
Given $Z_{ij} = 1$, $V_{ij} = v_{ij}$, and $U_{ij} = u_{ij}$, the joint distribution of observed covariates $\bX^o_i$ and observed responses $\bY^o_i$ is 
\begin{align*}
    \begin{bmatrix} \bX^o_i \\ \bY^o_i \end{bmatrix} \mid Z_{ij} = 1, V_{ij} = v_{ij}, U_{ij} = u_{ij} \sim \mathcal{G}_{d_{\bX^o_i} + d_{\bY^o_i}} \left( \begin{bmatrix} \bmu^o_{\bX \mid j} \\ \tilde{\bmu}^o_{\bY \mid j} \end{bmatrix}, \begin{bmatrix}
\bSigma^{oo}_{\bX \bX \mid j} & \bSigma^{oo}_{\bX \bY \mid j}  \\
\bSigma^{oo}_{\bY \bX \mid j} & \bSigma^{oo}_{\bY \bY \mid j}
\end{bmatrix} \right).
\end{align*}
\end{proposition}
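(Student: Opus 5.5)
The plan is to obtain the result directly from the closure of the multivariate Gaussian family under linear transformations, and in particular under coordinate selection. Fix $i$ and $j$ and condition throughout on $Z_{ij}=1$, $V_{ij}=v_{ij}$, $U_{ij}=u_{ij}$. Under this conditioning, the partitioned joint distribution of $(\bX^o_i,\bX^m_i,\bY^o_i,\bY^m_i)$ displayed just before the proposition is a $(d_\bX+d_\bY)$-variate Gaussian. So the first step is to justify that display carefully.

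For that step I will use the hierarchical representation of Section~\ref{subsec:contaminated_gaussian}. Given the indicators, $\bX_i$ is Gaussian with mean $\bmu_{\bX\mid j}$ and covariance $\bSigma_{\bX\bX\mid j}$. The conditional law of $\bY_i$ given $\bX_i=\bx_i$ is Gaussian with mean $\bb_{0j}+\bB_j\bx_i$ and a covariance that does not depend on $\bx_i$. I will write
\[
\bY_i=\bb_{0j}+\bB_j\bX_i+\boldsymbol{\varepsilon}_i,
\]
where $\boldsymbol{\varepsilon}_i$ is a zero-mean Gaussian vector independent of $\bX_i$ with the scaled covariance $\bSigma_{\bY\mid j}$. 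The pair $(\bX_i,\bY_i)$ is then an affine image of the independent Gaussian pair $(\bX_i,\boldsymbol{\varepsilon}_i)$, so it is jointly Gaussian. Its moments come from the linear-Gaussian result cited from \citet{bishop2006pattern}: the mean $\tilde{\bmu}_{\bY\mid j}$, the cross-covariances $\bSigma_{\bX\mid j}\bB_j^\top$ (with the $\bX$-scaling), and the sum form of $\bSigma_{\bY\bY\mid j}$.

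I will then reorder the coordinates within $\bX_i$ and within $\bY_i$ into observed and missing parts. This reordering is a permutation, which is linear and invertible, so Gaussianity is preserved. The mean and covariance are simply permuted into the displayed block form.

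The core step is to define the selection matrix $\mathbf{S}_i$ of size $(d^o_{\bX_i}+d^o_{\bY_i})\times(d_\bX+d_\bY)$. Each row of $\mathbf{S}_i$ is a standard basis row picking one observed coordinate. With this matrix, $(\bX^o_i,\bY^o_i)=\mathbf{S}_i(\bX^o_i,\bX^m_i,\bY^o_i,\bY^m_i)$. Since $\mathbf{A}\mathbf{W}\sim\mathcal{G}(\mathbf{A}\bmu,\mathbf{A}\bSigma\mathbf{A}^\top)$ whenever $\mathbf{W}\sim\mathcal{G}(\bmu,\bSigma)$, the vector $\mathbf{S}_i$ applied to the partitioned vector is Gaussian. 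Its mean is the subvector $(\bmu^o_{\bX\mid j},\tilde{\bmu}^o_{\bY\mid j})$, and its covariance is the submatrix formed by the rows and columns indexed ``$o$''. That submatrix is exactly the block matrix with entries $\bSigma^{oo}_{\bX\bX\mid j}$, $\bSigma^{oo}_{\bX\bY\mid j}$, $\bSigma^{oo}_{\bY\bX\mid j}$, $\bSigma^{oo}_{\bY\bY\mid j}$. Equivalently, I can invoke the standard marginalization property of Gaussians from \citet[Ch.~3]{MarKen79}.

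The only delicate point is the first step. One has to make sure the joint Gaussian claim is valid, that is, that conditioning on $U_{ij}$ does not interact with $\bX_i$. This holds because the contamination indicators are defined independently of the covariate values, so the error $\boldsymbol{\varepsilon}_i$ is independent of $\bX_i$.

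A smaller bookkeeping point is that the ``$o$'' index set varies with $i$. Degenerate cases also need to be covered: if no $\bY$ coordinates are observed, the result reduces to the marginal of $\bX^o_i$, and symmetrically when no $\bX$ coordinates are observed. The selection-matrix argument handles these cases uniformly.
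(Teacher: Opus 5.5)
Your proposal is correct and follows the route the paper relies on. The paper obtains the conditional joint Gaussian law of $(\bX_i,\bY_i)$ from the linear-Gaussian result of \citet[Ch.~2.3]{bishop2006pattern} and partitions it into observed and missing blocks; it then defers to \citet{tong2026newlookgaussianmixtures} and reads off the observed block by the standard Gaussian marginalization property, which is exactly your selection-matrix step.

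One small sharpening of your ``delicate point'': $V_{ij}$ is not independent of $\bX_i$, so the precise facts needed are that, under the product-form complete-data likelihood, $U_{ij}$ is independent of $(V_{ij},\bX_i)$ given $Z_{ij}=1$, and the conditional law of $\bY_i$ given $(\bX_i,V_{ij},U_{ij})$ depends only on $(\bX_i,U_{ij})$. Together these make $\bY_i-\bb_{0j}-\bB_j\bX_i$ independent of $\bX_i$ given both indicators.
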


\begin{proposition} \label{res2}
Given $Z_{ij} = 1$, observed covariates $\bx^o_i$, observed responses $\by^o_i$, $V_{ij} = v_{ij}$, and $U_{ij} = u_{ij}$, the distribution of missing covariates $\bX^m_i$ is
\begin{align*}
    \bX^m_i \mid \bx^o_i, \by^o_i, Z_{ij} = 1, V_{ij} = v_{ij}, U_{ij} = u_{ij} \sim \mathcal{G}_{d_{\bX^m_i}} \left( \bx_{ij}, \bSigma^\ast_{ij} \right),
\end{align*}
where
\begin{align*}
    \bx_{ij} &= E \left( \bX^m_i \mid \bx^o_i, \by^o_i, Z_{ij} = 1, V_{ij} = v_{ij}, U_{ij} = u_{ij} \right) \\
    &= \bmu^m_{\bX \mid j} + \begin{bmatrix} \bSigma^{mo}_{\bX \bX \mid j} & \bSigma^{mo}_{\bX \bY \mid j} \end{bmatrix} \begin{bmatrix} \bSigma^{oo}_{\bX \bX \mid j} & \bSigma^{oo}_{\bX \bY \mid j} \\ \bSigma^{oo}_{\bY \bX \mid j} & \bSigma^{oo}_{\bY \bY \mid j} \end{bmatrix}^{-1} \left( \begin{bmatrix} \bx^o_i \\ \by^o_i \end{bmatrix} - \begin{bmatrix} \bmu^o_{\bX \mid j} \\ \tilde{\bmu}^o_{\bY \mid j} \end{bmatrix} \right) \\[1ex]
    \text{and} \quad \bSigma^\ast_{ij} &= \text{Cov} \left( \bX^m_i \mid \bx^o_i, \by^o_i, Z_{ij} = 1, V_{ij} = v_{ij}, U_{ij} = u_{ij} \right) \\
    &= \bSigma^{mm}_{\bX \bX \mid j} - \begin{bmatrix} \bSigma^{mo}_{\bX \bX \mid j} & \bSigma^{mo}_{\bX \bY \mid j} \end{bmatrix} \begin{bmatrix} \bSigma^{oo}_{\bX \bX \mid j} & \bSigma^{oo}_{\bX \bY \mid j} \\ \bSigma^{oo}_{\bY \bX \mid j} & \bSigma^{oo}_{\bY \bY \mid j} \end{bmatrix}^{-1} \begin{bmatrix} \bSigma^{om}_{\bX \bX \mid j} \\ \bSigma^{om}_{\bY \bX \mid j} \end{bmatrix}.
\end{align*}
\end{proposition}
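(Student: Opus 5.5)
The plan is to reduce Proposition~\ref{res2} to the standard conditioning formula for a partitioned multivariate Gaussian vector. By the joint distribution displayed just before Proposition~\ref{res1}, the vector $(\bX^o_i,\bX^m_i,\bY^o_i,\bY^m_i)$ is jointly Gaussian given $Z_{ij}=1$, $V_{ij}=v_{ij}$, $U_{ij}=u_{ij}$. Once these three latent quantities are fixed, the scalars $(v_{ij}+(1-v_{ij})/\eta_{\bX\mid j})^{-1}$ and $(u_{ij}+(1-u_{ij})/\eta_{\bY\mid j})^{-1}$ are constants. Every block $\bSigma^{\cdot\cdot}_{\cdot\cdot\mid j}$ is therefore a fixed matrix, and no contamination mixing remains.

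The steps are as follows.

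\textbf{Step 1: choose the subvector.} Using the marginalization property of the Gaussian (\citealp[Ch.~3]{MarKen79}), integrate out $\bY^m_i$. This gives the joint Gaussian law of $(\bX^m_i,\bX^o_i,\bY^o_i)$. Its covariance is obtained by deleting the rows and columns indexed by $\bY^m_i$ and permuting the rest.

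\textbf{Step 2: partition.} Group the subvector as $\bX^m_i$ versus $\bS_i=(\bX^{o\top}_i,\bY^{o\top}_i)^\top$. By Proposition~\ref{res1}, $\bS_i$ has mean $(\bmu^{o\top}_{\bX\mid j},\tilde{\bmu}^{o\top}_{\bY\mid j})^\top$ and covariance $\bOmega$ equal to the $2\times2$ block matrix appearing in the statement. The cross-covariance $\text{Cov}(\bX^m_i,\bS_i)$ is read off the big joint matrix as $[\bSigma^{mo}_{\bX\bX\mid j}\ \ \bSigma^{mo}_{\bX\bY\mid j}]$. Its transpose is $[\bSigma^{om}_{\bX\bX\mid j};\ \bSigma^{om}_{\bY\bX\mid j}]$. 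This uses the symmetry identities $(\bSigma^{mo}_{\bX\bY\mid j})^\top=\bSigma^{om}_{\bY\bX\mid j}$ and $(\bSigma^{mo}_{\bX\bX\mid j})^\top=\bSigma^{om}_{\bX\bX\mid j}$, which follow from $\bSigma_{\bY\bX\mid j}=\bSigma_{\bX\bY\mid j}^\top$ as defined.

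\textbf{Step 3: condition.} Apply the conditional Gaussian theorem: if $(\ba,\bb)$ is jointly Gaussian with $\bSigma_{bb}$ nonsingular, then
\[
\ba\mid\bb \sim \mathcal{G}\bigl(\bmu_a+\bSigma_{ab}\bSigma_{bb}^{-1}(\bb-\bmu_b),\ \bSigma_{aa}-\bSigma_{ab}\bSigma_{bb}^{-1}\bSigma_{ba}\bigr).
\]
With $\ba=\bX^m_i$ and $\bb=\bS_i$, this yields $\bx_{ij}$ and $\bSigma^\ast_{ij}$ exactly as stated.

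The only genuine point to check is the invertibility of $\bOmega$. For this I would note that the full joint covariance is positive definite. It is the covariance of the linear transform $(\bX_i,\bY_i)=(\bX_i,\bb_{0j}+\bB_j\bX_i+\boldsymbol{\varepsilon}_i)$ of independent Gaussians with positive definite covariances $c_v\bSigma_{\bX\mid j}$ and $c_u\bSigma_{\bY\mid j}$, where $c_v,c_u>0$ are the two scalar factors. The corresponding block matrix $\begin{bmatrix} I & 0\\ \bB_j & I\end{bmatrix}$ is invertible, so the full covariance is positive definite. Any principal submatrix of a positive definite matrix, and in particular $\bOmega$, is then positive definite and hence invertible.

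Careful bookkeeping of the block indices when reordering the vector is the main, though routine, obstacle. Degenerate cases such as no observed entries or no missing covariates are handled trivially by interpreting empty blocks appropriately.
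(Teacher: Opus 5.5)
Your proposal is correct and takes essentially the same route as the paper, whose proof simply defers to the cited reference for the standard partitioned-Gaussian argument: conditional on $Z_{ij}=1$, $V_{ij}=v_{ij}$, $U_{ij}=u_{ij}$ the vector is jointly Gaussian, so marginalizing out $\bY^m_i$ and applying the usual conditional-Gaussian formula yields $\bx_{ij}$ and $\bSigma^\ast_{ij}$. Your positive-definiteness argument, via the invertible block-triangular map from $(\bX_i,\boldsymbol{\varepsilon}_i)$ to $(\bX_i,\bY_i)$, is a useful addition because it justifies the invertibility of the observed-block covariance, which the paper takes for granted.
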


\begin{proposition}\label{res3}
Given $Z_{ij} = 1$, observed covariates $\bx^o_i$, observed responses $\by^o_i$, $V_{ij} = v_{ij}$, and $U_{ij} = u_{ij}$, the distribution of missing responses $\bY^m_i$ is
\begin{align*}
    \bY^m_i \mid \bx^o_i, \by^o_i, Z_{ij} = 1, V_{ij} = v_{ij}, U_{ij} = u_{ij} \sim \mathcal{G}_{d_{\bY^m_i}} \left( \by_{ij}, \bSigma^{\ast \ast}_{ij} \right),
\end{align*}
where
\begin{align*}
    \by_{ij} &= E \left( \bY^m_i \mid \bx^o_i, \by^o_i, Z_{ij} = 1, V_{ij} = v_{ij}, U_{ij} = u_{ij} \right) \\
    &= \tilde{\bmu}^m_{\bY \mid j} + \begin{bmatrix} \bSigma^{mo}_{\bY \bX \mid j} & \bSigma^{mo}_{\bY \bY \mid j} \end{bmatrix} \begin{bmatrix} \bSigma^{oo}_{\bX \bX \mid j} & \bSigma^{oo}_{\bX \bY \mid j} \\ \bSigma^{oo}_{\bY \bX \mid j} & \bSigma^{oo}_{\bY \bY \mid j} \end{bmatrix}^{-1} \left( \begin{bmatrix} \bx^o_i \\ \by^o_i \end{bmatrix} - \begin{bmatrix} \bmu^o_{\bX \mid j} \\ \tilde{\bmu}^o_{\bY \mid j} \end{bmatrix} \right) \\[1ex]
    \text{and} \quad \bSigma^{\ast \ast}_{ij} &= \text{Cov} \left( \bY^m_i \mid \bx^o_i, \by^o_i, Z_{ij} = 1, V_{ij} = v_{ij}, U_{ij} = u_{ij} \right) \\
    &= \bSigma^{mm}_{\bY \bY \mid j} - \begin{bmatrix} \bSigma^{mo}_{\bY \bX \mid j} & \bSigma^{mo}_{\bY \bY \mid j} \end{bmatrix} \begin{bmatrix} \bSigma^{oo}_{\bX \bX \mid j} & \bSigma^{oo}_{\bX \bY \mid j} \\ \bSigma^{oo}_{\bY \bX \mid j} & \bSigma^{oo}_{\bY \bY \mid j} \end{bmatrix}^{-1} \begin{bmatrix} \bSigma^{om}_{\bX \bY \mid j} \\ \bSigma^{om}_{\bY \bY \mid j} \end{bmatrix}.
\end{align*}
\end{proposition}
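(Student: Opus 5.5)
The plan is to read Proposition~\ref{res3} directly off the conditional-Gaussian formula applied to the joint law of $(\bX^o_i,\bX^m_i,\bY^o_i,\bY^m_i)$ displayed just before Proposition~\ref{res1}. First I will justify that joint law. Conditional on $Z_{ij}=1$, $V_{ij}=v_{ij}$, $U_{ij}=u_{ij}$, we have $\bX_i\sim\mathcal{G}_{d_\bX}(\bmu_{\bX\mid j},c_v\bSigma_{\bX\mid j})$ with $c_v=(v_{ij}+(1-v_{ij})/\eta_{\bX\mid j})^{-1}$. We can write $\bY_i=\bb_{0j}+\bB_j\bX_i+\boldsymbol{\varepsilon}_i$, where $\boldsymbol{\varepsilon}_i\sim\mathcal{G}_{d_\bY}(\mathbf{0},c_u\bSigma_{\bY\mid j})$ is independent of $\bX_i$ and $c_u$ is the analogous scalar for $u_{ij}$. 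Hence $(\bX_i,\bY_i)$ is an affine image of a Gaussian vector and is therefore jointly Gaussian. Its mean and covariance blocks are obtained by direct computation and coincide with $\tilde{\bmu}_{\bY\mid j}$, $\bSigma_{\bX\bY\mid j}$ and $\bSigma_{\bY\bY\mid j}$ as given, taking the $\bY$-scale factor as $c_u$ (the inverse). Reordering coordinates into observed and missing parts is a permutation, so Gaussianity is preserved with the permuted mean and covariance.

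Next, let $\bS^o=(\bX^o_i{}^\top,\bY^o_i{}^\top)^\top$ collect all observed coordinates. The pair $(\bY^m_i,\bS^o)$ is a sub-vector of a Gaussian vector, so it is Gaussian. By Proposition~\ref{res1}, the marginal of $\bS^o$ has mean $(\bmu^{o\top}_{\bX\mid j},\tilde{\bmu}^{o\top}_{\bY\mid j})^\top$ and covariance $\bOmega^{oo}$, the $2\times2$ block matrix of the $oo$-blocks. The cross-covariance between $\bY^m_i$ and $\bS^o$ is $[\bSigma^{mo}_{\bY\bX\mid j}\;\bSigma^{mo}_{\bY\bY\mid j}]$, and its transpose is $[\bSigma^{om\top}_{\bY\bX\mid j}\ \ldots]$, which by symmetry of the full covariance equals $(\bSigma^{om}_{\bX\bY\mid j};\bSigma^{om}_{\bY\bY\mid j})$. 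Indeed, $\bSigma_{\bY\bX\mid j}=\bSigma_{\bX\bY\mid j}^\top$ gives $(\bSigma^{mo}_{\bY\bX\mid j})^\top=\bSigma^{om}_{\bX\bY\mid j}$. The marginal variance of $\bY^m_i$ is $\bSigma^{mm}_{\bY\bY\mid j}$.

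Finally, I will apply the standard partitioned-Gaussian conditioning result \citep[Ch.~2.3]{bishop2006pattern}. If $(\mathbf{a},\mathbf{b})$ is Gaussian, then $\mathbf{a}\mid\mathbf{b}$ is Gaussian with mean $\bmu_a+\bSigma_{ab}\bSigma_{bb}^{-1}(\mathbf{b}-\bmu_b)$ and covariance $\bSigma_{aa}-\bSigma_{ab}\bSigma_{bb}^{-1}\bSigma_{ba}$. Substituting $\mathbf{a}=\bY^m_i$ and $\mathbf{b}=\bS^o=(\bx^o_i,\by^o_i)$ yields exactly $\by_{ij}$ and $\bSigma^{\ast\ast}_{ij}$.

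The only genuine obstacle is ensuring that $\bOmega^{oo}$ is invertible. I will argue that the full covariance is positive definite: it factors as $L\,\mathrm{diag}(c_v\bSigma_{\bX\mid j},c_u\bSigma_{\bY\mid j})L^\top$, with $L$ block unit lower-triangular containing $\bB_j$, and both $\bSigma_{\bX\mid j}$ and $\bSigma_{\bY\mid j}$ are positive definite. Any principal submatrix of a positive definite matrix is again positive definite, so $\bOmega^{oo}$ is invertible. The degenerate cases $d_{\bY^m_i}=0$ (nothing to prove) and $d^o_{\bX_i}+d^o_{\bY_i}=0$ (the conditional reduces to the marginal) are treated by convention.
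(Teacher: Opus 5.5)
Your proposal is correct and follows essentially the same route as the paper. The paper builds the joint Gaussian law of $(\bX_i,\bY_i)$ given $Z_{ij}=1$, $V_{ij}=v_{ij}$, $U_{ij}=u_{ij}$ following \citet[Ch.~2.3]{bishop2006pattern}, then defers to \citet{tong2026newlookgaussianmixtures}, obtaining Proposition~\ref{res3} from the standard partitioned-Gaussian conditioning formula for $\bY^m_i$ given $(\bX^o_i,\bY^o_i)$. Your additional justifications are sound: the affine representation $\bY_i=\bb_{0j}+\bB_j\bX_i+\boldsymbol{\varepsilon}_i$ for joint Gaussianity, the $L\,\mathrm{diag}(c_v\bSigma_{\bX\mid j},c_u\bSigma_{\bY\mid j})\,L^\top$ factorization ensuring the observed block is invertible, and taking the $\bY$ scale factor as the inverse $c_u$ (consistent with the paper's $\bSigma_{\bY\bY\mid j}$ despite the missing inverse in its displayed conditional law of $\bY_i$).
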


\begin{proposition}\label{res4}
Given $Z_{ij} = 1$, observed covariates $\bx^o_i$, observed responses $\by^o_i$, $V_{ij} = v_{ij}$, and $U_{ij} = u_{ij}$, the joint distribution of missing responses $\bY^m_i$ and missing covariates $\bX^m_i$ is
\begin{align*}
    \bY^m_i, \bX^m_i \mid \bx^o_i, \by^o_i, Z_{ij} = 1, V_{ij} = v_{ij}, U_{ij} = u_{ij} \sim \mathcal{G}_{d_{\bY^m_i} + d_{\bX^m_i}} \left( \bmu^{\ast \ast \ast}_{ij}, \bSigma^{\ast \ast \ast}_{ij} \right),
\end{align*}
where
\begin{align*}
    \bmu^{\ast \ast \ast}_{ij} &= \begin{bmatrix} \tilde{\bmu}^m_{\bY \mid j} \\ \bmu^m_{\bX \mid j} \end{bmatrix} + \begin{bmatrix} \bSigma^{mo}_{\bY \bY \mid j} & \bSigma^{mo}_{\bY \bX \mid j} \\ \bSigma^{mo}_{\bX \bY \mid j} & \bSigma^{mo}_{\bX \bX \mid j} \end{bmatrix} \begin{bmatrix} \bSigma^{oo}_{\bY \bY \mid j} & \bSigma^{oo}_{\bY \bX \mid j} \\ \bSigma^{oo}_{\bX \bY \mid j} & \bSigma^{oo}_{\bX \bX \mid j} \end{bmatrix}^{-1} \left( \begin{bmatrix} \bx^o_i \\ \by^o_i \end{bmatrix} - \begin{bmatrix} \bmu^o_{\bX \mid j} \\ \tilde{\bmu}^o_{\bY \mid j} \end{bmatrix} \right) \\[1ex]
    \text{and} \quad \bSigma^{\ast \ast \ast}_{ij} &= \text{Cov} \left( \bY^m_i, \bX^m_i \mid \bx^o_i, \by^o_i, Z_{ij} = 1, V_{ij} = v_{ij}, U_{ij} = u_{ij} \right) \\ & \begin{bmatrix} \bSigma^{mm}_{\bY \bY \mid j} & \bSigma^{mm}_{\bY \bX \mid j} \\ \bSigma^{mm}_{\bX \bY \mid j} & \bSigma^{mm}_{\bX \bX \mid j} \end{bmatrix} - \begin{bmatrix} \bSigma^{mo}_{\bY \bY \mid j} & \bSigma^{mo}_{\bY \bX \mid j} \\
\bSigma^{mo}_{\bX \bY \mid j} & \bSigma^{mo}_{\bX \bX \mid j} \end{bmatrix} \begin{bmatrix} \bSigma^{oo}_{\bY \bY \mid j} & \bSigma^{oo}_{\bY \bX \mid j} \\ \bSigma^{oo}_{\bX \bY \mid j} & \bSigma^{oo}_{\bX \bX \mid j} \end{bmatrix}^{-1} \begin{bmatrix} \bSigma^{om}_{\bY \bY \mid j} & \bSigma^{om}_{\bY \bX \mid j} \\ \bSigma^{om}_{\bX \bY \mid j} & \bSigma^{om}_{\bX \bX \mid j} \end{bmatrix}.
\end{align*}
\end{proposition}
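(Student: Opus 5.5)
Proposition~\ref{res4} follows from the standard conditioning formula for a partitioned multivariate Gaussian, applied to the four-block joint distribution of $(\bX^o_i,\bX^m_i,\bY^o_i,\bY^m_i)$ given $Z_{ij}=1$, $V_{ij}=v_{ij}$, $U_{ij}=u_{ij}$ displayed just before Proposition~\ref{res1}. I first permute coordinates so that the missing block $(\bY^m_i,\bX^m_i)$ comes first and the observed block second. A permutation of coordinates is a linear map with an orthogonal permutation matrix $\mathbf{P}$. Hence the permuted vector is again Gaussian, with mean $\mathbf{P}$ times the original mean and covariance $\mathbf{P}\,\bSigma\,\mathbf{P}^\top$. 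This puts the covariance in the form $\begin{bmatrix}\bOmega_{mm}&\bOmega_{mo}\\ \bOmega_{om}&\bOmega_{oo}\end{bmatrix}$, where each block is assembled from the sub-blocks $\bSigma^{\cdot\cdot}_{\cdot\cdot\mid j}$ in the order matching the stacking of the missing and observed parts.

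Next, I invoke the conditional Gaussian result \citep[Ch.~2.3]{bishop2006pattern}. If $\bOmega_{oo}$ is nonsingular, the missing block given the observed block is Gaussian with mean $\bmu_m+\bOmega_{mo}\bOmega_{oo}^{-1}(\text{obs}-\bmu_o)$ and covariance $\bOmega_{mm}-\bOmega_{mo}\bOmega_{oo}^{-1}\bOmega_{om}$. Nonsingularity holds because the full joint covariance is positive definite: it is the covariance of $(\bX_i,\bB_j\bX_i+\boldsymbol{\varepsilon}_i)$ with $\bSigma_{\bX\mid j},\bSigma_{\bY\mid j}\succ 0$, and any principal submatrix of a positive definite matrix is positive definite. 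The marginal of the observed block, needed for the conditioning formula, is already given by Proposition~\ref{res1}. The dimension of the conditional distribution is $d_{\bY^m_i}+d_{\bX^m_i}$.

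Finally, I reconcile the orderings with the statement. The marginal blocks $\bSigma_{\bY\bY}^{mm}$ and $\bSigma_{\bX\bX}^{mm}$, and the cross-block $\bSigma_{\bY\bX}^{mm}$, are read off directly. This is the step needing care. In the statement the observed deviation vector is written as $(\bx^o_i,\by^o_i)$, while the inverted matrix is ordered $(\bY,\bX)$. I will therefore fix the observed block in a single ordering and permute the cross-covariance columns consistently; the product $\bOmega_{mo}\bOmega_{oo}^{-1}(\cdot)$ is invariant under a simultaneous permutation $\mathbf{Q}$ of the observed coordinates, since $\bOmega_{mo}\mathbf{Q}^\top(\mathbf{Q}\bOmega_{oo}\mathbf{Q}^\top)^{-1}\mathbf{Q}=\bOmega_{mo}\bOmega_{oo}^{-1}$. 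The displayed formulas should be read with one consistent observed ordering (the $(\bY^o,\bX^o)$ ordering of the inverse then pairs with the deviation reordered as $(\by^o_i-\tilde\bmu^o_{\bY\mid j},\bx^o_i-\bmu^o_{\bX\mid j})$). As a consistency check, the diagonal blocks of the resulting mean and covariance reproduce Propositions~\ref{res2} and~\ref{res3}.
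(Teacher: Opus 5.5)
Your proposal is correct and takes essentially the same approach as the paper, which applies the standard partitioned-Gaussian conditioning formula to the joint Gaussian law of $(\bX^o_i,\bX^m_i,\bY^o_i,\bY^m_i)$ given $Z_{ij}=1$, $V_{ij}=v_{ij}$, $U_{ij}=u_{ij}$ and defers the details to \citet{tong2026newlookgaussianmixtures}. Your two additions are also right: the positive-definiteness check on the observed block, and the point that the deviation vector in $\bmu^{\ast\ast\ast}_{ij}$ must be ordered $(\by^o_i-\tilde{\bmu}^o_{\bY\mid j},\,\bx^o_i-\bmu^o_{\bX\mid j})$ to match the $(\bY,\bX)$ ordering of the inverted matrix, which corrects a genuine ordering typo in the displayed statement.
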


\begin{proof}
The proofs for the above propositions can be found in \citet{tong2026newlookgaussianmixtures}. % It is important to note that these propositions hold only when conditioning on the contamination indicators $V_{ij}$ and $U_{ij}$.
\end{proof}

Note that the above propositions depend on the values of the latent contamination indicators $V_{ij}$ and $U_{ij}$. Consequently, quantities such as $\bx_{ij}$, $\by_{ij}$ and covariance matrices each have four possible forms. To simplify the notation, we introduce a superscript notation indicating the values of the two contamination indicators. The first symbol corresponds to $V_{ij}$, whereas the second corresponds to $U_{ij}$. A filled circle, $\bullet$, indicates that the corresponding indicator equals $1$, whereas an open circle, $\circ$, indicates that it equals $0$. For example, the conditional expectation of missing covariates $\bX^m_i$ has the following four forms:
\begin{align*}
    \overset{\bullet \bullet}{\bx}_{ij} &=
    E \left( \bX_i^m \mid \bx_i^o, \by_i^o, Z_{ij} = 1, V_{ij} = 1, U_{ij} = 1 \right), \quad \overset{\bullet \circ}{\bx}_{ij} = E \left( \bX_i^m \mid \bx_i^o, \by_i^o, Z_{ij}=1, V_{ij}=1, U_{ij}=0
    \right), \\ 
    \overset{\circ \bullet}{\bx}_{ij} &= E \left( \bX_i^m \mid \bx_i^o, \by_i^o, Z_{ij}=1, V_{ij} = 0, U_{ij} = 1 \right), \quad  \overset{\circ \circ}{\bx}_{ij}
    = E \left( \bX_i^m \mid \bx_i^o, \by_i^o, Z_{ij}=1, V_{ij} = 0, U_{ij} = 0 \right).
\end{align*}
Similarly, the four conditional expectations for the missing responses $\bY_i^m$ are denoted using the same superscript notation, giving rise to $\overset{\bullet\bullet}{\by}_{ij}$, $\overset{\bullet\circ}{\by}_{ij}$, $\overset{\circ\bullet}{\by}_{ij}$, and $\overset{\circ\circ}{\by}_{ij}$. Because these conditional expectations vary according to the values of $V_{ij}$ and $U_{ij}$, the resulting imputation depends on the type of data point described in Table~\ref{tab:leverage-outliers}.

To avoid repetition throughout the paper, we present only the case corresponding to $V_{ij} = 1$ and $U_{ij} = 1$; the remaining three cases are obtained by replacing the values of $V_{ij}$ and $U_{ij}$ accordingly. The complete notation for all quantities under the four possible combinations of $V_{ij}$ and $U_{ij}$ is provided in Section $1$ of the Supplementary Material.

\subsection{Results about Covariance and Squared Mahalanobis Distance} 
\label{sec:cov_res}

Missing values in the covariates $\bX_i^m$ and responses $\bY_i^m$ introduce additional variability, which in turn affects the maximum likelihood estimates of the model parameters. As a result, several conditional covariance matrices are required in the E-step. Recall that Propositions~\ref{res2}--\ref{res4} provide the following conditional covariance matrices:
\begin{align*}
    \bSigma^\ast_{ij} &= \text{Cov} \left( \bX_i^m \mid \bx_i^o, \by_i^o, Z_{ij} = 1, V_{ij} = v_{ij}, U_{ij} = u_{ij} \right), \\
    \bSigma^{\ast \ast}_{ij} &= \text{Cov} \left( \bY_i^m \mid \bx_i^o, \by_i^o, Z_{ij} = 1, V_{ij} = v_{ij}, U_{ij} = u_{ij} \right), \\
    \text{and } \bSigma^{\ast \ast \ast}_{ij} &= \text{Cov} \left( \bY_i^m, \bX_i^m \mid \bx_i^o, \by_i^o, Z_{ij} = 1, V_{ij} = v_{ij}, U_{ij} = u_{ij} \right).
\end{align*}
Building on these results, we have 
\begin{align*}
    \bOmega_{\bX \mid ij} &= \text{Cov} \left( \begin{bmatrix} \bx_i^o \\ \bX_i^m \end{bmatrix} \, \middle| \, \bx_i^o, \by_i^o, Z_{ij} = 1, V_{ij} = v_{ij}, U_{ij} = u_{ij} \right) = \begin{bmatrix} \boldsymbol{0} & \boldsymbol{0} \\ \boldsymbol{0} & \bSigma^\ast_{ij} \end{bmatrix}, \\[1ex]
    \bOmega_{\bX_1 \mid ij} &= \text{Cov} \left( \begin{bmatrix} 1 \\ \bx_i^o \\ \bX_i^m \end{bmatrix} \, \middle| \, \bx_i^o, \by_i^o, Z_{ij} = 1, V_{ij} = v_{ij}, U_{ij} = u_{ij} \right) = \begin{bmatrix} \boldsymbol{0} & \boldsymbol{0} \\ \boldsymbol{0} & \boldsymbol{\Omega}_{\bX \mid ij} \end{bmatrix}, \\[1ex]
    \bOmega_{\bY \mid ij} &= \text{Cov} \left( \begin{bmatrix} \by_i^o \\ \bY_i^m \end{bmatrix} \, \middle| \, \bx_i^o, \by_i^o, Z_{ij} = 1, V_{ij} = v_{ij}, U_{ij} = u_{ij} \right) = \begin{bmatrix} \boldsymbol{0} & \boldsymbol{0} \\ \boldsymbol{0} & \bSigma^{\ast \ast}_{ij} \end{bmatrix}, \\[1ex]
    \bDelta_{\bY \bX_1 \mid ij} &= \text{Cov} \left( \begin{bmatrix} \by_i^o \\ \bY_i^m \end{bmatrix}, \begin{bmatrix} 1 \\ \bx_i^o \\ \bX_i^m \end{bmatrix} \, \middle| \, \bx_i^o, \by_i^o, Z_{ij} = 1, V_{ij} = v_{ij}, U_{ij} = u_{ij} \right) = \begin{bmatrix} 0 & \boldsymbol{0} & \boldsymbol{0} \\ \boldsymbol{0} & \boldsymbol{0} & \bSigma^{\ast \ast \ast}_{ij} \end{bmatrix} \\[1ex]
    \text{and} \quad \bSigma_{\widetilde{\bY} \mid ij} &= \text{Cov} \left( \begin{bmatrix} \by^o_i \\ \bY^m_i \end{bmatrix} - \bbeta^\top_j \begin{bmatrix} 1 \\ \bx^o_i \\ \bX^m_i \end{bmatrix} \, \middle| \, \bx^o_i, \by^o_i, Z_{ij} = 1, V_{ij} = v_{ij}, U_{ij} = u_{ij} \right) \\ 
    &= \bOmega_{\bY \mid ij} - \bDelta_{\bY \bX_1 \mid ij} {\bbeta}_j - \bbeta^\top_j \bDelta^\top_{\bY \bX_1 \mid ij} + \bbeta^\top_j \bOmega_{\bX_1 \mid ij} \bbeta_j.
\end{align*}
These covariance matrices allow us to evaluate the conditional expectations of the squared Mahalanobis distances appearing in the E-step. In particular, we need
\begin{align*}
    M_{\bX \mid ij} &= E \left( \delta \left( \begin{bmatrix} \bx^o_i \\ \bX^m_i \end{bmatrix}, \bmu_{\bX \mid j}; \bSigma_{\bX \mid j} \right) \, \middle| \, \bx_i^o, \by_i^o, Z_{ij} = 1, V_{ij} = v_{ij}, U_{ij} = u_{ij}  \right) \\
    &= \text{Trace} \left( \bSigma^{-1}_{\bX \mid ij} \bOmega_{\bX \mid ij} \right) + \delta \left( \begin{bmatrix} \bx^o_i \\ \bx_{ij} \end{bmatrix}, \bmu_{\bX \mid j}, \bSigma_{\bX \mid j} \right)
\end{align*}
and
\begin{align*}
    M_{\bY \mid ij} &= E \left( \delta \left( \begin{bmatrix} \by^o_i \\ \bY^m_i \end{bmatrix}, \bmu_\bY \left( \begin{bmatrix} \bx^o_i \\ \bX^m_i \end{bmatrix}; \bbeta_j \right); \bSigma_{\bY \mid j} \right) \, \middle| \, \bx_i^o, \by_i^o, Z_{ij} = 1, V_{ij} = v_{ij}, U_{ij} = u_{ij} \right) \\
    &= \text{Trace} \left( \bSigma^{-1}_{\bY \mid ij} \bSigma_{\widetilde{\bY} \mid ij} \right) + \delta \left( \begin{bmatrix} \by^o_i \\ \by_{ij} \end{bmatrix}, \bmu_\bY \left( \begin{bmatrix} \bx^o_i \\ \bx_{ij}    
    \end{bmatrix}; \bbeta_j \right), \bSigma_{\bY \mid j} \right).
\end{align*}
As in the previous subsection, the quantities $\bOmega_{\bX \mid ij}$, $\bOmega_{\bX_1 \mid ij}$, $\bOmega_{\bY \mid ij}$, $\bDelta_{\bY \bX_1 \mid ij}$, $\bSigma_{\widetilde{\bY} \mid ij}$, $M_{\bX \mid ij}$, and $M_{\bY \mid ij}$ each have four possible forms depending on the values of $V_{ij}$ and $U_{ij}$. Accordingly, we continue to use the superscript notation using filled and open circles for simplicity. % A complete list of these quantities under all four combinations of $V_{ij}$ and $U_{ij}$ is provided in Section $1$ of the Supplementary Material. % For example, $\bOmega_{\bX \mid ij}$ has four forms: $\overset{\bullet \bullet}{\bOmega}_{\bX \mid ij}$, $\overset{\bullet \circ}{\bOmega}_{\bX \mid ij}$, $\overset{\circ \bullet}{\bOmega}_{\bX \mid ij}$, and $\overset{\circ \circ}{\bOmega}_{\bX \mid ij}$.

\subsection{E-step}

The E-step computes the conditional expectation of the complete-data log-likelihood in \eqref{eq:lcl}. To this end, the following conditional expectation terms, given the observed data $\bx^o_i$ and $\by^o_i$, need to be computed: 

\noindent
$1.\ E \left( Z_{ij} \mid \bx_i^o, \by_i^o \right)$
\qquad
$2.\ E \left( Z_{ij} V_{ij} \mid \bx_i^o, \by_i^o \right)$
\qquad
$3.\ E \left( Z_{ij} U_{ij} \mid \bx_i^o, \by_i^o \right)$ \\[1ex]
$4. \ E \left( Z_{ij} \left( V_{ij} + \dfrac{1 - V_{ij}}{\eta_{\bX \mid j}} \right) \delta \left( \begin{bmatrix} \bx^o_i \\ \bX^m_i \end{bmatrix}, \bmu_{\bX \mid j}; \bSigma_{\bX \mid j} \right)\mid \bx_i^o, \by_i^o \right)$ \\[1ex]
$5. \ E \left( Z_{ij} \left( U_{ij} + \dfrac{1 - U_{ij}}{\eta_{\bY \mid j}} \right) \delta \left( \begin{bmatrix} \by^o_i \\ \bY^m_i \end{bmatrix}, \bmu_\bY \left( \begin{bmatrix} \bx^o_i \\ \bX^m_i \end{bmatrix}; \bbeta_j \right); \bSigma_{\bY \mid j} \right)\mid \bx_i^o, \by_i^o \right)$.

% \begin{enumerate}
%     \item $E \left( Z_{ij} \mid \bx_i^o, \by_i^o \right)$; 
%     \item $E \left( Z_{ij} V_{ij} \mid \bx_i^o, \by_i^o \right)$;
%     \item $E \left( Z_{ij} U_{ij} \mid \bx_i^o, \by_i^o \right)$;
%     \item $E \left( Z_{ij} \left( V_{ij} + \dfrac{1 - V_{ij}}{\eta_{\bX \mid j}} \right) \delta \left( \begin{bmatrix} \bx^o_i \\ \bX^m_i \end{bmatrix}, \bmu_{\bX \mid j}; \bSigma_{\bX \mid j} \right)\mid \bx_i^o, \by_i^o \right)$;
%     \item $E \left( Z_{ij} \left( U_{ij} + \dfrac{1 - U_{ij}}{\eta_{\bY \mid j}} \right) \delta \left( \begin{bmatrix} \by^o_i \\ \bY^m_i \end{bmatrix}, \bmu_\bY \left( \begin{bmatrix} \bx^o_i \\ \bX^m_i \end{bmatrix}; \bbeta_j \right); \bSigma_{\bY \mid j} \right)\mid \bx_i^o, \by_i^o \right)$.
% \end{enumerate}

%%%%%%%%%%%%%%%%%%%%

Starting with the unknown component membership, we have
\begin{align*}
    \tilde{z}_{ij} = E \left( Z_{ij} \mid \bx_i^o, \by_i^o \right)
    &= P \left( Z_{ij} = 1 \mid \bx_i^o, \by_i^o \right) 
    = \dfrac{\pi_j f \left( \bx_i^o, \by_i^o \mid Z_{ij} = 1 \right)}{\sum_{j' = 1}^k \pi_{j'} f \left( \bx_i^o, \by_i^o \mid Z_{ij'} = 1 \right)}.
\end{align*}
Here, the density $f \left( \bx_i^o, \by_i^o \mid Z_{ij} = 1 \right)$ can be obtained using the law of total probability by summing over the latent contamination indicators $V_{ij}$ and $U_{ij}$. Specifically,
\begin{align*}
    f \left( \bx_i^o, \by_i^o \mid Z_{ij} = 1 \right)
    &= \sum_{v_{ij} = 0}^1 \sum_{u_{ij} = 0}^1
    f \left( \bx_i^o, \by_i^o, V_{ij} = v_{ij}, U_{ij} = u_{ij} \mid Z_{ij} = 1 \right) \\
    &= \alpha_{\bX \mid j} \alpha_{\bY \mid j}
    f \left( \bx_i^o, \by_i^o \mid Z_{ij} = 1, V_{ij} = 1, U_{ij} = 1 \right) \\
    & \quad + \alpha_{\bX \mid j} \left( 1 - \alpha_{\bY \mid j} \right) f \left( \bx_i^o, \by_i^o \mid Z_{ij} = 1, V_{ij} = 1, U_{ij} = 0
    \right) \\
    & \quad + \left(1 - \alpha_{\bX \mid j}\right) \alpha_{\bY \mid j} f \left( \bx_i^o, \by_i^o \mid Z_{ij} = 1, V_{ij} = 0, U_{ij} = 1 \right) \\ 
    & \quad + \left( 1 - \alpha_{\bX \mid j} \right)
    \left( 1 - \alpha_{\bY \mid j} \right) f \left(
    \bx_i^o, \by_i^o \mid Z_{ij} = 1, V_{ij} = 0, U_{ij} = 0 \right).
\end{align*}
In the above, each density $f \left( \bx_i^o, \by_i^o \mid Z_{ij} = 1, V_{ij} = v_{ij}, U_{ij} = u_{ij} \right)$ can be obtained using Proposition~\ref{res1}.

For the remaining expectations, the law of iterated expectations provides a useful tool. In particular, for expectation $2$, which involves the contamination indicator $V_{ij}$, we have
\begin{align*}
    E \left( Z_{ij} V_{ij} \mid \bx_i^o, \by_i^o \right) = E \left( Z_{ij} \mid \bx_i^o, \by_i^o \right) E \left( V_{ij} \mid \bx_i^o, \by_i^o, Z_{ij} = 1 \right) = \tilde{z}_{ij} \tilde{v}_{ij},
\end{align*}
where
\begin{align*}
    \tilde{v}_{ij} = E \left( V_{ij} \mid \bx_i^o, \by_i^o, Z_{ij} = 1 \right)
    &= P \left( V_{ij} = 1 \mid \bx_i^o, \by_i^o, Z_{ij} = 1 \right) = \dfrac{\alpha_{\bX \mid j} f \left( \bx_i^o, \by_i^o \mid V_{ij} = 1, Z_{ij} = 1 \right)}{f \left( \bx_i^o, \by_i^o \mid Z_{ij} = 1 \right)}.
\end{align*}
The density $f \left( \bx_i^o, \by_i^o \mid V_{ij} = 1, Z_{ij} = 1 \right)$ in the numerator can be obtained by applying the law of total probability that marginalizes over $U_{ij}$ as follows
\begin{align*}
    & f \left( \bx_i^o, \by_i^o \mid V_{ij} = 1, Z_{ij} = 1 \right) \\
    =& \, \sum_{u_{ij}=0}^1
    f \left(
    \bx_i^o, \by_i^o, V_{ij} = 1, U_{ij} = u_{ij}
    \mid Z_{ij} = 1
    \right) \\
    =& \,
    \alpha_{\bY \mid j}
    f \left(
    \bx_i^o, \by_i^o
    \mid V_{ij} = 1, U_{ij} = 1, Z_{ij} = 1
    \right) +
    \left(1-\alpha_{\bY \mid j}\right)
    f \left(
    \bx_i^o, \by_i^o
    \mid V_{ij} = 1, U_{ij} = 0, Z_{ij} = 1
    \right).
\end{align*}
Similarly, for expectation $3$ involving the contamination indicator $U_{ij}$, we obtain
\begin{align*}
    E \left( Z_{ij} U_{ij} \mid \bx_i^o, \by_i^o \right) = E \left( Z_{ij} \mid \bx_i^o, \by_i^o \right) E \left( U_{ij} \mid \bx_i^o, \by_i^o, Z_{ij} = 1 \right) = \tilde{z}_{ij} \tilde{u}_{ij},
\end{align*}
where
\begin{align*}
    \tilde{u}_{ij} &= E \left( U_{ij} \mid \bx_i^o, \by_i^o, Z_{ij} = 1 \right) =
    P \left( U_{ij} = 1 \mid \bx_i^o, \by_i^o, Z_{ij} = 1 \right) =
    \dfrac{
    \alpha_{\bY \mid j}
    f \left( \bx_i^o, \by_i^o \mid U_{ij} = 1, Z_{ij} = 1 \right)
    }{
    f \left( \bx_i^o, \by_i^o \mid Z_{ij} = 1 \right)
    }.
\end{align*}
The density $f \left( \bx_i^o, \by_i^o \mid U_{ij} = 1, Z_{ij} = 1 \right)$ in the numerator can be obtained by applying the law of total probability that marginalizes over $V_{ij}$ as follows
\begin{align*}
    & f \left( \bx_i^o, \by_i^o \mid U_{ij} = 1, Z_{ij} = 1 \right) \\
    =& \,
    \sum_{v_{ij}=0}^1
    P \left( V_{ij} = v_{ij} \mid Z_{ij} = 1 \right)
    f \left(
    \bx_i^o, \by_i^o
    \mid V_{ij} = v_{ij}, U_{ij} = 1, Z_{ij} = 1
    \right) \\
    =& \,
    \alpha_{\bX \mid j}
    f \left(
    \bx_i^o, \by_i^o
    \mid V_{ij} = 1, U_{ij} = 1, Z_{ij} = 1
    \right) +
    \left(1-\alpha_{\bX \mid j}\right)
    f \left(
    \bx_i^o, \by_i^o
    \mid V_{ij} = 0, U_{ij} = 1, Z_{ij} = 1
    \right).
\end{align*}

%%%%%%%%%%%%%%%%%%%%

Lastly, we have the expectations involving the squared Mahalanobis distances 
\begin{align*}
    & E \left( Z_{ij} \left( V_{ij} + \dfrac{1 - V_{ij}}{\eta_{\bX \mid j}} \right) \delta \left( \begin{bmatrix} \bx^o_i \\ \bX^m_i \end{bmatrix}, \bmu_{\bX \mid j}; \bSigma_{\bX \mid j} \right)\mid \bx_i^o, \by_i^o \right) \\[1ex]
    =& \, \tilde{z}_{ij} E \left( \left( V_{ij} + \dfrac{1 - V_{ij}}{\eta_{\bX \mid j}} \right) \delta \left( \begin{bmatrix} \bx^o_i \\ \bX^m_i \end{bmatrix}, \bmu_{\bX \mid j}; \bSigma_{\bX \mid j} \right) \mid \bx_i^o, \by_i^o, Z_{ij} = 1 \right) \\[1ex]
    =& \, \tilde{z}_{ij} \left[ \tilde{v}_{ij} \tilde{u}_{ij} \overset{\bullet \bullet}{M}_{\bX \mid ij} + \tilde{v}_{ij} (1 - \tilde{u}_{ij}) \overset{\bullet \circ}{M}_{\bX \mid ij} + \left( \dfrac{1 - \tilde{v}_{ij}}{\eta_{\bX \mid j}} \right) \tilde{u}_{ij} \overset{\circ \bullet}{M}_{\bX \mid ij} + \left( \dfrac{1 - \tilde{v}_{ij}}{\eta_{\bX \mid j}} \right) (1 - \tilde{u}_{ij}) \overset{\circ \circ}{M}_{\bX \mid ij} \right]
\end{align*}
and
\begin{align*}
    & E \left( Z_{ij} \left( U_{ij} + \dfrac{1 - U_{ij}}{\eta_{\bY \mid j}} \right) \delta \left( \begin{bmatrix} \by^o_i \\ \bY^m_i \end{bmatrix}, \bmu_\bY \left( \begin{bmatrix} \bx^o_i \\ \bX^m_i \end{bmatrix}; \bbeta_j \right); \bSigma_{\bY \mid j} \right)\mid \bx_i^o, \by_i^o \right) \\[1ex]
    =& \, \tilde{z}_{ij} E \left( \left( U_{ij} + \dfrac{1 - U_{ij}}{\eta_{\bY \mid j}} \right) \delta \left( \begin{bmatrix} \by^o_i \\ \bY^m_i \end{bmatrix}, \bmu_\bY \left( \begin{bmatrix} \bx^o_i \\ \bX^m_i \end{bmatrix}; \bbeta_j \right); \bSigma_{\bY \mid j} \right)\mid \bx_i^o, \by_i^o, Z_{ij} = 1 \right) \\[1ex]
    =& \, \tilde{z}_{ij} \left[ \tilde{v}_{ij} \tilde{u}_{ij} \overset{\bullet \bullet}{M}_{\bY \mid ij} + \tilde{v}_{ij} \left( \dfrac{1 -\tilde{u}_{ij}}{\eta_{\bY \mid j}} \right) \overset{\bullet \circ}{M}_{\bY \mid ij} + 1 - \tilde{v}_{ij} \tilde{u}_{ij} \overset{\circ \bullet}{M}_{\bY \mid ij} + (1 - \tilde{v}_{ij}) \left( \dfrac{1 -\tilde{u}_{ij}}{\eta_{\bY \mid j}} \right) \overset{\circ \circ}{M}_{\bY \mid ij} \right],
\end{align*}
where 
\begin{align*}
   \overset{\bullet \bullet}{M}_{\bX \mid ij} &=  \text{Trace} \left( \bSigma^{-1}_{\bX \mid ij} \overset{\bullet \bullet}{\bOmega}_{\bX \mid ij} \right) + \delta \left( \begin{bmatrix} \bx^o_i \\ \overset{\bullet \bullet}{\bx}_{ij} \end{bmatrix}, \bmu_{\bX \mid j}, \bSigma_{\bX \mid j} \right) \\
    \text{and} \quad \overset{\bullet \bullet}{M}_{\bY \mid ij} &= \text{Trace} \left( \bSigma^{-1}_{\bY \mid ij} \overset{\bullet \bullet}{\bSigma}_{\widetilde{\bY} \mid ij} \right) + \delta \left( \begin{bmatrix} \by^o_i \\ \overset{\bullet \bullet}{\by}_{ij} \end{bmatrix}, \bmu_\bY \left( \begin{bmatrix} \bx^o_i \\ \overset{\bullet \bullet}{\bx}_{ij} \end{bmatrix}; \bbeta_j \right), \bSigma_{\bY \mid j} \right).
\end{align*}
% Note that as mentioned in Sections~\ref{sec:dist_res} and~\ref{sec:cov_res}, to avoid repetition, we report only the case corresponding to $V_{ij}=1$ and $U_{ij}=1$; the other three cases are obtained by replacing the values of $V_{ij}$ and $U_{ij}$ accordingly, for detailed notation see Section $1$ of the Supplementary Material.

With the above five conditional expectations computed, the conditional expectation of the complete-data log-likelihood in~\ref{eq:lcl} is thus given by $Q (\bvartheta) = Q_1 (\pi_j, \alpha_{\bX \mid j}, \alpha_{\bY \mid j}) + Q_2 (\bmu_{\bX \mid j}, \bSigma_{\bX \mid j}, \eta_{\bX \mid j}) + Q_3 (\bbeta_j, \bSigma_{\bY \mid j}, \eta_{\bY \mid j})$, where
\begin{align*}
    Q_1 (\pi_j, \alpha_{\bX \mid j}, \alpha_{\bY \mid j}) &= \sum_{i = 1}^n \sum_{j = 1}^k \tilde{z}_{ij} \Big[ \log \pi_j + \tilde{v}_{ij} \log \alpha_{\bX \mid j} +  \tilde{u}_{ij} \log \alpha_{\bY \mid j} + \\
    & \hspace{1.0cm} (1 - \tilde{v}_{ij}) \log (1 - \alpha_{\bX \mid j}) + (1 - \tilde{u}_{ij}) \log (1 - \alpha_{\bY \mid j}) \Big], \\
    Q_2 (\bmu_{\bX \mid j}, \bSigma_{\bX \mid j}, \eta_{\bX \mid j}) &= -\dfrac{1}{2} \sum_{i = 1}^n \sum_{j = 1}^k \tilde{z}_{ij} \Bigg[ \log \lvert \bSigma_{\bX \mid j} \rvert + d_\bX (1 - \tilde{v}_{ij}) \log \eta_{\bX \mid j} + \tilde{v}_{ij} \tilde{u}_{ij} \overset{\bullet \bullet}{M}_{\bX \mid ij} + \\
    & \hspace{1.0cm} \tilde{v}_{ij} (1 - \tilde{u}_{ij}) \overset{\bullet \circ}{M}_{\bX \mid ij} + \left( \dfrac{1 - \tilde{v}_{ij}}{\eta_{\bX \mid j}} \right) \tilde{u}_{ij} \overset{\circ \bullet}{M}_{\bX \mid ij} + \left( \dfrac{1 - \tilde{v}_{ij}}{\eta_{\bX \mid j}} \right) (1 - \tilde{u}_{ij}) \overset{\circ \circ}{M}_{\bX \mid ij} \Bigg], \\
    Q_3 (\bbeta_j, \bSigma_{\bY \mid j}, \eta_{\bY \mid j}) &= -\dfrac{1}{2} \sum_{i = 1}^n \sum_{j = 1}^k \tilde{z}_{ij} \Bigg[ \log \lvert \bSigma_{\bY \mid j} \rvert + d_{\bY} (1 - \tilde{u}_{ij}) \log \eta_{\bY \mid j} + \tilde{v}_{ij} \tilde{u}_{ij} \overset{\bullet \bullet}{M}_{\bY \mid ij} +  \\
    & \hspace{1.5cm} \tilde{v}_{ij} \left( \dfrac{1 -\tilde{u}_{ij}}{\eta_{\bY \mid j}} \right) \overset{\bullet \circ}{M}_{\bY \mid ij} + (1 - \tilde{v}_{ij}) \tilde{u}_{ij} \overset{\circ \bullet}{M}_{\bY \mid ij} + (1 - \tilde{v}_{ij}) \left( \dfrac{1 -\tilde{u}_{ij}}{\eta_{\bY \mid j}} \right) \overset{\circ \circ}{M}_{\bY \mid ij} \Bigg].
\end{align*}

\subsection{M-Step}

The M-step updates are obtained by maximizing the complete-data log-likelihood in \eqref{eq:lcl} with respect to the model parameters. Since we use an ECM algorithm, the M-step is replaced by two simpler conditional-maximization (CM) steps. To simplify the notation, we suppress the iteration superscripts throughout this section.

In the first CM-step, $\eta_{\bX\mid j}$ and $\eta_{\bY\mid j}$ are held fixed, while the remaining parameters are updated. Starting with the mixing proportions $\pi_j$ and the contamination proportions $\alpha_{\bX\mid j}$ and $\alpha_{\bY\mid j}$, for $j=1,\ldots,k$, the updates are given by
\begin{align*}
    \pi_j = \dfrac{1}{n}\displaystyle\sum_{i = 1}^n \tilde{z}_{ij},
\quad
    \alpha_{\bX \mid j} = \dfrac{\displaystyle\sum_{i = 1}^n \tilde{z}_{ij} \tilde{v}_{ij}}{\displaystyle\sum_{i = 1}^n \tilde{z}_{ij}},
\quad \text{and,} \quad 
    \alpha_{\bY \mid j} = \dfrac{\displaystyle \sum_{i = 1}^n \tilde{z}_{ij} \tilde{u}_{ij}}{\displaystyle \sum_{i = 1}^n \tilde{z}_{ij}}.
\end{align*}
When the constraint on  $\alpha_{\bX \mid j}$  is imposed, any estimate of $\alpha_{\bX \mid j}$ below 0.5 is set to 0.5. The same adjustment is applied to $\alpha_{\bY \mid j}$.
The parameters relative to the independent variable $\bX$ are then updated as
\begin{align*}
    \bmu_{\bX \mid j} = \dfrac{\displaystyle \sum_{i = 1}^n \tilde{z}_{ij} \left( \tilde{v}_{ij} \tilde{u}_{ij} \begin{bmatrix} \bx^o_i \\ \overset{\bullet \bullet}{\bx}_{ij} \end{bmatrix} + \tilde{v}_{ij} (1 - \tilde{u}_{ij}) \begin{bmatrix} \bx^o_i \\ \overset{\bullet \circ}{\bx}_{ij} \end{bmatrix} + \left( \dfrac{1 - \tilde{v}_{ij}}{\eta_{\bX \mid j}} \right) \tilde{u}_{ij} \begin{bmatrix} \bx^o_i \\ \overset{\circ \bullet}{\bx}_{ij} \end{bmatrix} + \left( \dfrac{1 - \tilde{v}_{ij}}{\eta_{\bX \mid j}} \right) (1 - \tilde{u}_{ij}) \begin{bmatrix} \bx^o_i \\ \overset{\circ \circ}{\bx} _{ij} \end{bmatrix} \right)}{\displaystyle \sum_{i = 1}^n \tilde{z}_{ij} \left( \tilde{v}_{ij} + \dfrac{1 - \tilde{v}_{ij}}{\eta_{\bX \mid j}} \right)}
\end{align*}
and
\begin{align*}
    \bSigma_{\bX \mid j} = \dfrac{\displaystyle \sum_{i = 1}^n \tilde{z}_{ij} \left( \tilde{v}_{ij} \tilde{u}_{ij} \overset{\bullet \bullet}{\bS}_{\bX \mid ij} + \tilde{v}_{ij} (1 - \tilde{u}_{ij}) \overset{\bullet \circ}{\bS}_{\bX \mid ij} + \left( \dfrac{1 - \tilde{v}_{ij}}{\eta_{\bX \mid j}} \right) \tilde{u}_{ij} \overset{\circ \bullet}{\bS}_{\bX \mid ij} + \left( \dfrac{1 - \tilde{v}_{ij}}{\eta_{\bX \mid j}} \right) (1 - \tilde{u}_{ij}) \overset{\circ \circ}{\bS} _{\bX \mid ij} \right)}{\displaystyle \sum_{i = 1}^n \tilde{z}_{ij}},
\end{align*}
where
\begin{align*}
    \overset{\bullet \bullet}{\bS}_{\bX \mid ij} &= \overset{\bullet \bullet}{\bOmega}_{\bX \mid ij} + \left( \begin{bmatrix} \bx^o_i \\ \overset{\bullet \bullet}{\bx}_{ij} \end{bmatrix} - \bmu_{\bX \mid j} \right) \left( \begin{bmatrix} \bx^o_i \\ \overset{\bullet \bullet}{\bx}_{ij} \end{bmatrix} - \bmu_{\bX \mid j} \right)^\top.
\end{align*}
As in the E-step, to avoid repetition, we report only the case corresponding to $V_{ij}=1$ and $U_{ij}=1$; the other three cases are obtained by replacing the values of $V_{ij}$ and $U_{ij}$ accordingly; for detailed notation see Section $1$ of the Supplementary Material. We then update the regression coefficients
\begin{align*}
    \bbeta_j &= \left[ \displaystyle \sum_{i = 1}^n \tilde{z}_{ij} \left( \tilde{v}_{ij} \tilde{u}_{ij} \overset{\bullet \bullet}{\bB}_{\bX \mid ij} + \tilde{v}_{ij} \left( \dfrac{1 - \tilde{u}_{ij}}{\eta_{\bY \mid j}} \right) \overset{\bullet \circ}{\bB}_{\bX \mid ij} + (1 - \tilde{v}_{ij}) \tilde{u}_{ij} \overset{\circ \bullet}{\bB}_{\bX \mid ij} + (1 - \tilde{v}_{ij}) \left( \dfrac{1 - \tilde{u}_{ij}}{\eta_{\bY \mid j}} \right) \overset{\circ \circ}{\bB}_{\bX \mid ij} \right) \right]^{-1} \times \\
    & \left[ \sum_{i = 1}^n \tilde{z}_{ij} \left( \tilde{v}_{ij} \tilde{u}_{ij} \overset{\bullet \bullet}{\bB}_{\bY \bX \mid ij} + \tilde{v}_{ij} \left( \dfrac{1 - \tilde{u}_{ij}}{\eta_{\bY \mid j}} \right) \overset{\bullet \circ}{\bB}_{\bY \bX \mid ij} + (1 - \tilde{v}_{ij}) \tilde{u}_{ij} \overset{\circ \bullet}{\bB}_{\bY \bX \mid ij} + (1 - \tilde{v}_{ij}) \left( \dfrac{1 - \tilde{u}_{ij}}{\eta_{\bY \mid j}} \right) \overset{\circ \circ}{\bB}_{\bY \bX \mid ij} \right) \right]
\end{align*}
where
\begin{align*}
    \overset{\bullet \bullet}{\bB}_{\bX \mid ij} = \begin{bmatrix} 1 \\ \bx^o_i \\[1ex] \overset{\bullet \bullet}{\bx}_{ij} \end{bmatrix} \begin{bmatrix} 1 & \bx^o_i & \overset{\bullet \bullet}{\bx}_{ij} \end{bmatrix} + \overset{\bullet \bullet}{\bOmega}_{\bX_1 \mid ij} \quad \text{and} \quad
    \overset{\bullet \bullet}{\bB}_{\bY \bX \mid ij} = \begin{bmatrix} 1 \\ \bx^o_i \\[1ex] \overset{\bullet \bullet}{\bx}_{ij} \end{bmatrix} \begin{bmatrix} \by^o_i & \overset{\bullet \bullet}{\by}_{ij} \end{bmatrix} + \overset{\bullet \bullet}{\bDelta}^\top_{\bY \bX_1 \mid ij}.
\end{align*}
The last parameter updated in the first CM-step
\begin{align*}
    \bSigma_{\bY \mid j} = \dfrac{\displaystyle \sum_{i = 1}^n \tilde{z}_{ij} \left( \tilde{v}_{ij} \tilde{u}_{ij} \overset{\bullet \bullet}{\bS}_{\bY \mid ij} + \tilde{v}_{ij} \left( \dfrac{1 - \tilde{u}_{ij}}{\eta_{\bY \mid j}} \right) \overset{\bullet \circ}{\bS}_{\bY \mid ij} + (1 - \tilde{v}_{ij}) \tilde{u}_{ij} \overset{\circ \bullet}{\bS}_{\bY \mid ij} + (1 - \tilde{v}_{ij}) \left( \dfrac{1 - \tilde{u}_{ij}}{\eta_{\bY \mid j}} \right) \overset{\circ \circ}{\bS}_{\bY \mid ij} \right)}{\displaystyle \sum_{i = 1}^n \tilde{z}_{ij}},
\end{align*}
where
\begin{align*}
    \overset{\bullet \bullet}{\bS}_{\bY \mid ij} &= \overset{\bullet \bullet}{\bSigma}_{\widetilde{\bY} \mid ij} + \left( \begin{bmatrix} \by^o_i \\ \overset{\bullet \bullet}{\by}_{ij} \end{bmatrix} - \bbeta^\top_j \begin{bmatrix} 1 \\ \bx^o_i \\ \overset{\bullet \bullet}{\bx}_{ij} \end{bmatrix} \right) \left( \begin{bmatrix} \by^o_i \\ \overset{\bullet \bullet}{\by}_{ij} \end{bmatrix} - \bbeta^\top_j \begin{bmatrix} 1 \\ \bx^o_i \\ \overset{\bullet \bullet}{\bx}_{ij} \end{bmatrix} \right)^\top.
\end{align*}

In the second CM-step,  $\eta_{\bX \mid j}$ and $\eta_{\bY \mid j}$ are updated, while the other parameters are held constant.
The update for $\eta_{\bX \mid j}$ is
\begin{align*}
    \eta_{\bX \mid j} = \text{max} \left \{\eta^*, \dfrac{\displaystyle \sum_{i=1}^n \tilde{z}_{ij} \left( 1 - \tilde{v}_{ij} \right) \left( \tilde{u}_{ij} \overset{\circ \bullet}{M}_{\bX \mid ij} + (1 - \tilde{u}_{ij}) \overset{\circ \circ}{M} _{\bX \mid ij} \right)}{\displaystyle d_\bX \sum_{i=1}^n \tilde{z}_{ij} \left( 1 - \tilde{v}_{ij} \right)} \right \},
\end{align*}
where $\eta^\ast > 1$ is a small number, such as $1.001$. Similarly, the update for $\eta_{\bY \mid j}$ is
\begin{align*}
    \eta_{\bY \mid j} = \text{max} \left \{\eta^*, \dfrac{\displaystyle \sum_{i=1}^n \tilde{z}_{ij} \left( 1 - \tilde{u}_{ij} \right) \left( \tilde{v}_{ij} \overset{\bullet \bullet}{M}_{\bY \mid ij} + (1 - \tilde{v}_{ij}) \overset{\bullet \circ}{M}_{\bY \mid ij} \right)}{\displaystyle d_\bY \sum_{i=1}^n \tilde{z}_{ij} \left( 1 - \tilde{u}_{ij} \right)} \right \}.
\end{align*}

\section{Computational details}
\label{sec:compdet}

For the proposed model, we adopt the following strategy to initialize the ECM algorithm:
\begin{itemize}

\item Apply mean imputation on each variable to obtain the complete dataset $\mathcal{D}^{(0)} = \left \{ \left( \bx^{(0)}_1, \by^{(0)}_1 \right), \dots, \left( \bx^{(0)}_n, \by^{(0)}_n \right) \right \}$.

\item Perform $k$-medoids clustering \citep{kaufman_finding_1990} on $\mathcal{D}^{(0)}$ and use the obtained hard clustering solution to initialize $\tilde{z}^{(0)}_{ij}$.

\item Set the initial proportions of good observations $\alpha^{(0)}_{\bX \mid j} = \alpha^{(0)}_{\bY \mid j} = 0.6$ and degrees of contamination $\eta^{(0)}_{\bX \mid j} = \eta^{(0)}_{\bY \mid j} = 1.4$.
  % lambda <- matrix(0.1, nrow = K, ncol = p, dimnames = list(names_K, names_p))
  % alpha  <- rep(0.6, K)
  % eta    <- rep(1.4, K)

\item Initialize the mixing proportions and other parameters  as follows
\begin{align*}
    \pi^{(0)}_j &= \frac{1}{n}\sum_{i = 1}^n \tilde{z}^{(0)}_{ij}, \quad \bmu^{(0)}_{\bX \mid j} = \dfrac{\sum_{i = 1}^n \tilde{z}^{(0)}_{ij} \bx^{(0)}_i}{\sum_{i = 1}^n \tilde{z}^{(0)}_{ij}}, \quad \bSigma^{(0)}_{\bX \mid j} = \dfrac{\sum_{i = 1}^n \tilde{z}^{(0)}_{ij} \left( \bx^{(0)}_i - \bmu^{(0)}_{\bX \mid j} \right) \left( \bx^{(0)}_i - \bmu^{(0)}_{\bX \mid j} \right)^\top }{\sum_{i = 1}^n \tilde{z}^{(0)}_{ij}}, \\
    \bbeta^{(0)}_j &= \left( \sum_{i = 1}^n \tilde{z}^{(0)}_{ij} \begin{bmatrix} 1 \\ \bx^{(0)}_i \end{bmatrix} \begin{bmatrix} 1 & \bx^{(0)}_i \end{bmatrix} \right)^{-1} \left( \sum_{i = 1}^n \tilde{z}^{(0)}_{ij} \begin{bmatrix} 1 \\ \bx^{(0)}_i \end{bmatrix} \begin{bmatrix} 1 & \by^{(0)}_i \end{bmatrix} \right), \\
    \text{and} \quad \bSigma^{(0)}_{\bY \mid j} &= \dfrac{\displaystyle\sum_{i = 1}^n \tilde{z}^{(0)}_{ij} \left( \by^{(0)}_i - \bbeta^{(0)\top}_j \begin{bmatrix} 1 \\ \bx^{(0)}_i \end{bmatrix} \right) \left( \by^{(0)}_i - \bbeta^{(0)\top}_j \begin{bmatrix} 1 \\ \bx^{(0)}_i \end{bmatrix} \right)^\top }{\sum_{i = 1}^n \tilde{z}^{(0)}_{ij}}.
\end{align*}

\end{itemize}
The ECM algorithm is run until the Aitken convergence criterion is reached \citep{aitken_series_1926, mcnicholas_serial_2010}. At convergence, let $\hat{z}_{ij}$, $\hat{v}_{ij}$, and $\hat{u}_{ij}$ be the values of $\tilde{z}_{ij}$, $\tilde{v}_{ij}$, and $\tilde{u}_{ij}$, respectively. Then, the component membership of observation $(\bx_i, \by_i)$ can be determined using the maximum \emph{a posteriori} probability (MAP) operator such that $\text{MAP} (\hat{z}_{ij}) = 1$ if $\max \{ \hat{z}_{i1}, \dots, \hat{z}_{ik} \}$ occurs at $j$ and $\text{MAP} (\hat{z}_{ij}) = 0$ otherwise. When $\text{MAP} (\hat{z}_{ij}) = 1$, observation $(\bx_i, \by_i)$ is declared to belong to component $j$. To establish whether $(\bx_i, \by_i)$ is a typical observation, outlier, good leverage point, or bad leverage point in component $j$, we consider
$\hat{u}_{ij}$ and $\hat{v}_{ij}$, where $j$ is selected such that
$\text{MAP} (\hat{z}_{ij}) = 1$, and use the rule given in
Table~\ref{tab:mapping}.

\begin{table}[!t]
\centering
\caption{Rule for classifying a generic observation ($\bx_i, \by_i$) when $\text{MAP} (\tilde{z}_{ij}) = 1$ into one of the four categories of Table \ref{tab:leverage-outliers}}
\label{tab:mapping}
\begin{tabular}{lcc}
\toprule
& \multicolumn{2}{c}{$\hat{v}_{ij}$} \\
 $\hat{u}_{ij}$ & $[0,0.5)$ & $[0.5,1]$ \\
\midrule
$[0,0.5)$ & Bad leverage & Outlier \\
$[0.5,1]$  & Good leverage & Typical (bulk of the data) \\
\bottomrule
\end{tabular}
\end{table}

At convergence, let $\overset{\bullet \bullet}{\hat{\bx}}_{ij}$, $\overset{\bullet \circ}{\hat{\bx}}_{ij}$, $\overset{\circ \bullet}{\hat{\bx}}_{ij}$, and $\overset{\circ \circ}{\hat{\bx}}_{ij}$ be the conditional expectations of missing covariates. Likewise, let $\overset{\bullet \bullet}{\hat{\by}}_{ij}$, $\overset{\bullet \circ}{\hat{\by}}_{ij}$, $\overset{\circ \bullet}{\hat{\by}}_{ij}$, and $\overset{\circ \circ}{\hat{\by}}_{ij}$ be the conditional expectations of missing responses. Then, missing covariates and missing responses can be imputed by $\hat{\bx}_{ij}$ and $ \hat{\by}_{ij}$, respectively, in which for $\text{MAP} (\hat{z}_{ij}) = 1$, $\hat{\bx}_{ij} = \overset{\bullet \bullet}{\hat{\bx}}_{ij}$ and $\hat{\by}_{ij} = \overset{\bullet \bullet}{\hat{\by}}_{ij}$ if $(\bx_i, \by_i)$ is classified as a typical observation, $\hat{\bx}_{ij} = \overset{\bullet \circ}{\hat{\bx}}_{ij}$ and $\hat{\by}_{ij} = \overset{\bullet \circ}{\hat{\by}}_{ij}$ if $(\bx_i, \by_i)$ is classified as an outlier, $\hat{\bx}_{ij} = \overset{\circ \bullet}{\hat{\bx}}_{ij}$ and $\hat{\by}_{ij} = \overset{\circ \bullet}{\hat{\by}}_{ij}$ if $(\bx_i, \by_i)$ is classified as a good leverage point, and $\hat{\bx}_{ij} = \overset{\circ \circ}{\hat{\bx}}_{ij}$ and $\hat{\by}_{ij} = \overset{\circ \circ}{\hat{\by}}_{ij}$ if $(\bx_i, \by_i)$ is classified as a bad leverage point.

\section{Real Data Analysis}
\label{sec:real data}

In this section, we illustrate the proposed methodology using the well-known \textit{Automobile} dataset \citep{schlimmer1985automobile}, originally compiled from the 1985 edition of Ward's Automotive Yearbook and freely available from the UCI Machine Learning Repository. The dataset contains vehicle specifications, insurance risk, and pricing information on $n = 205$ automobiles. For our analysis, we consider all $15$ continuous variables available in the dataset. Among these, \texttt{Normalized losses} ($Y_1$) and \texttt{Price} ($Y_2$) are treated as responses, while the remaining $d_{\bX}=13$ continuous variables are treated as covariates. These covariates describe various physical and technical characteristics of the automobiles, including dimensions, curb weight, engine characteristics, horsepower, and fuel consumption. The dataset contains missing values: $41$ in \texttt{Normalized losses}, $4$ in \texttt{Price}, $4$ in \texttt{Bore}, $4$ in \texttt{Stroke}, $2$ in \texttt{Horsepower}, and $2$ in \texttt{Peak RPM}. Since the continuous variables are measured on substantially different scales, each variable is standardized by subtracting its sample mean and dividing by its sample standard deviation. The resulting standardized variables are used in the analysis.

We fit the Gaussian CWM and the proposed Contaminated Gaussian CWM, with the number of clusters $k$  varying from $1$ to $4$. Table~\ref{tab:Automobile_fitting} shows that for every value of $k$, the contaminated Gaussian CWM is a better fit than its Gaussian counterpart. The smallest BIC value of ($4560.302$) corresponds to $k = 2$.  These results support both the presence of two distinct groups of automobiles and the use of a contaminated specification to accommodate atypical observations.
The selected partition is relatively balanced, with ($111$) observations assigned to Cluster~1 and ($94$) to Cluster~2, corresponding to estimated mixing proportions of ($0.541$) and ($0.459$), respectively; see Table~\ref{auto_clusters_outliers}. The analysis identifies ($40$) good leverage points, ($21$) response outliers, and ($4$) bad leverage points, in addition to ($140$) typical observations. Cluster~1 contains a larger number of atypical observations, particularly good leverage points. Moreover, the estimated variance-inflation parameters range from ($4.167$) to ($7.869$), indicating that the atypical components are substantially more dispersed than their regular counterparts, especially in the response space. Figure~\ref{fig:auto_outliers} illustrates how the proposed model distinguishes observations with atypical covariates from those with atypical responses, with bad leverage points occurring relatively rarely.

\begin{table}[!t]
\centering
\renewcommand{\arraystretch}{0.7}
\caption{Fitting results to the normalized Automobile data.}
\label{tab:Automobile_fitting}
\begin{tabular}{llrrr}
\toprule
Model                     & $k$ & Number of parameters   & BIC        \\ \midrule
Gaussian CWM              & $1$ & $135$                         & $5764.546$  \\
                          & $2$ & $271$                         & $5128.127$  \\
                          & $3$ & $407$                         & $5264.645$  \\
                          & $4$ & $543$                         & $5447.823$  \\ \midrule
Contaminated Gaussian CWM & $1$ & $139$                         & $5661.730$  \\
                          & $2$ & $279$                         & $4560.302$ \\
                          & $3$ & $419$                         & $4995.081$ \\
                          & $4$ & $559$                         & $4681.923$ \\ \bottomrule
\end{tabular}
\end{table}

%%%%%%%%%%%%%%%%%%%%%%%%%%%%%%%%%%%%%%%%

\begin{table}[!t]
\centering
\renewcommand{\arraystretch}{0.7}
\caption{Characteristics of clusters and their typical/atypical points according to the best-fitting model.}
\label{auto_clusters_outliers}
\begin{tabular}{lrrllrr}
\toprule
                     & Cluster $1$ & Cluster $2$ &  &                       & Cluster $1$ & Cluster $2$ \\ \midrule
Cluster size         & 111         & 94          &  & $\pi_j$               & 0.541       & 0.459       \\
Bad leverage points  & 3           & 1           &  & $\alpha_{\bX \mid j}$ & 0.723       & 0.844       \\
Outliers             & 13          & 8           &  & $\alpha_{\bY \mid j}$ & 0.797       & 0.753       \\
Good leverage points & 27          & 13          &  & $\eta_{\bX \mid j}$   & 4.942       & 4.167       \\
Typical points       & 68          & 72          &  & $\eta_{\bY \mid j}$   & 7.869       & 5.043       \\ \bottomrule
\end{tabular}
\end{table}

%%%%%%%%%%%%%%%%%%%%%%%%%%%%%%%%%%%%%%%%

\begin{figure}[!t]
    \centering
    \includegraphics[width=4.35in]{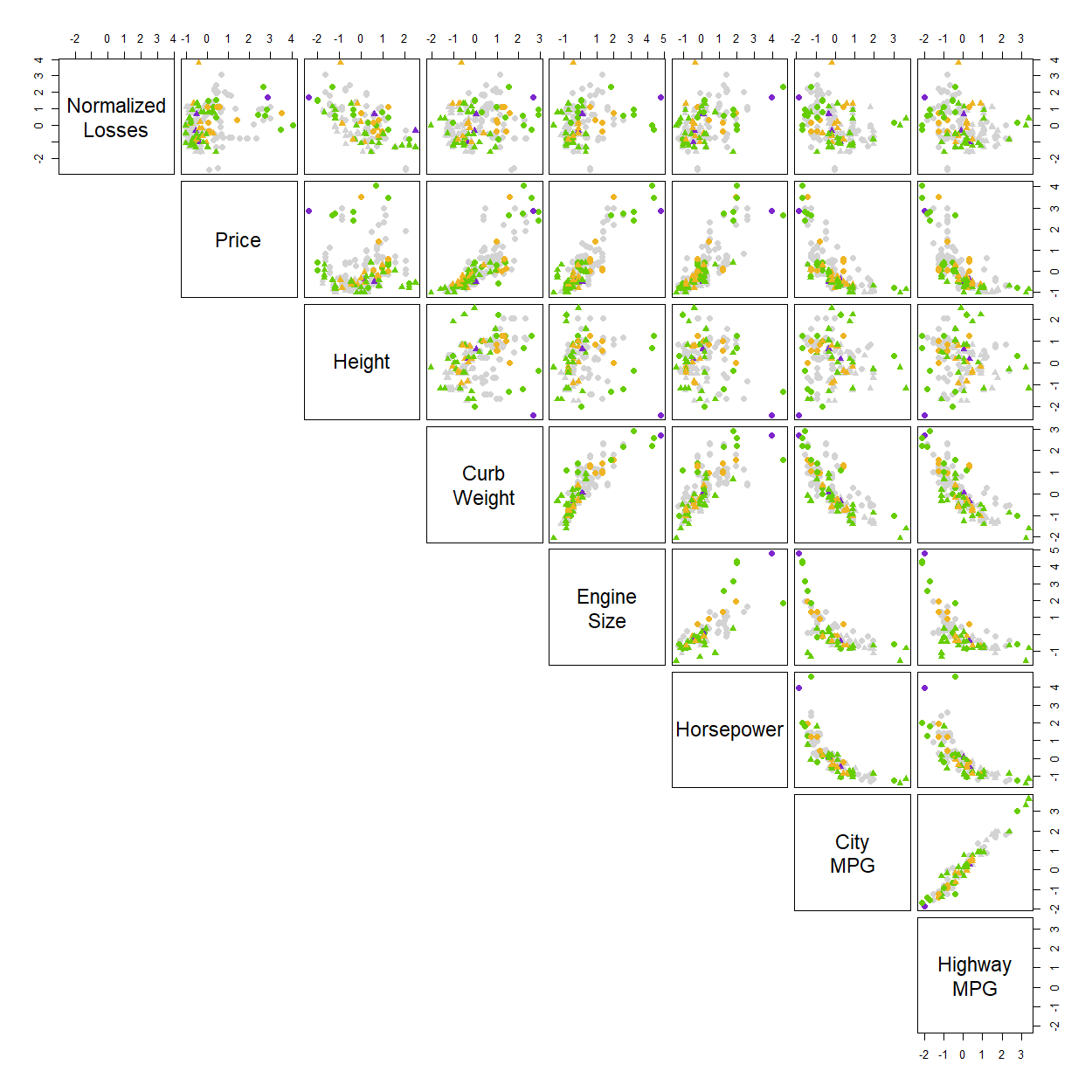}
    \caption{Different types of points indicated by the best-fitting model. Bad leverage points are purple, outliers are gold, good leverage points are green, and typical points are gray.}
    \label{fig:auto_outliers}
\end{figure}

%%%%%%%%%%%%%%%%%%%%%%%%%%%%%%%%%%%%%%%%

The scatterplots in Figure~\ref{fig:auto_y_vs_x} indicate that the separation between the clusters is more evident for \texttt{Price} and for several physical and technical covariates than for \texttt{Normalized losses}. The standardized covariate means in Figure~\ref{fig:auto_dumbbell} provide a clear interpretation of the two groups. Cluster~1 is characterized by smaller values of wheelbase, length, width, curb weight, engine size, and horsepower, together with higher city and highway fuel economy. Cluster~2 exhibits the opposite profile and generally contains larger, heavier, more powerful, and more expensive automobiles. Additional results supporting the interpretation of the two clusters are provided in Section $3$ of the Supplementary Material. %Appendix~\ref{app:auto_results}. 
These include the correlations between standardized covariates, distributions of automobile makes and symboling ratings, cluster-specific regression coefficients, and within-cluster covariance estimates.

%%%%%%%%%%%%%%%%%%%%%%%%%%%%%%%%%%%%%%%%

\begin{figure}[!t]
    \centering
    \includegraphics[width=4.0in]{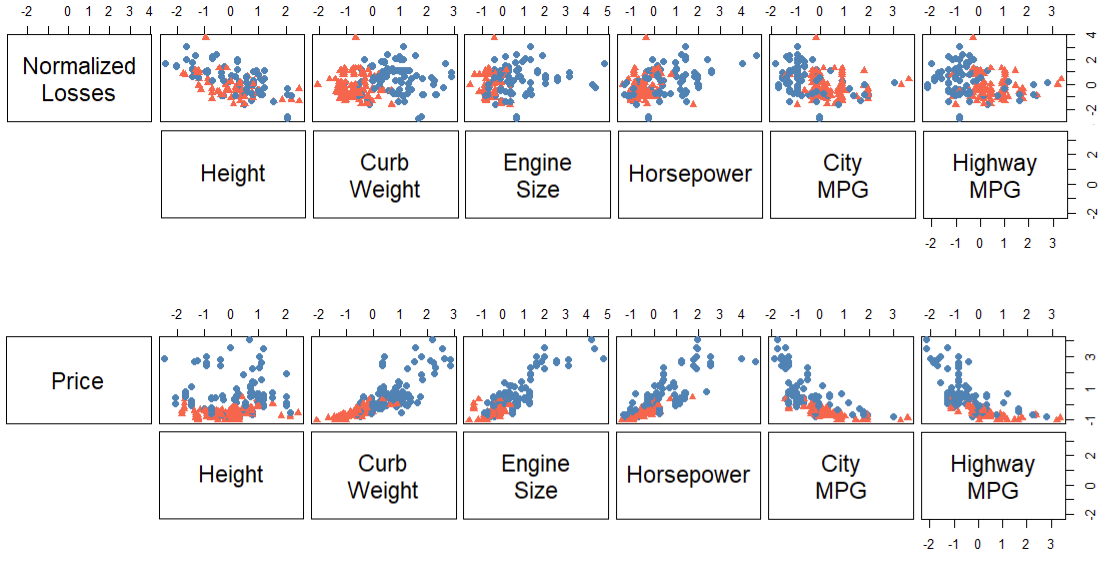} \\
    \includegraphics[width=4.50in]{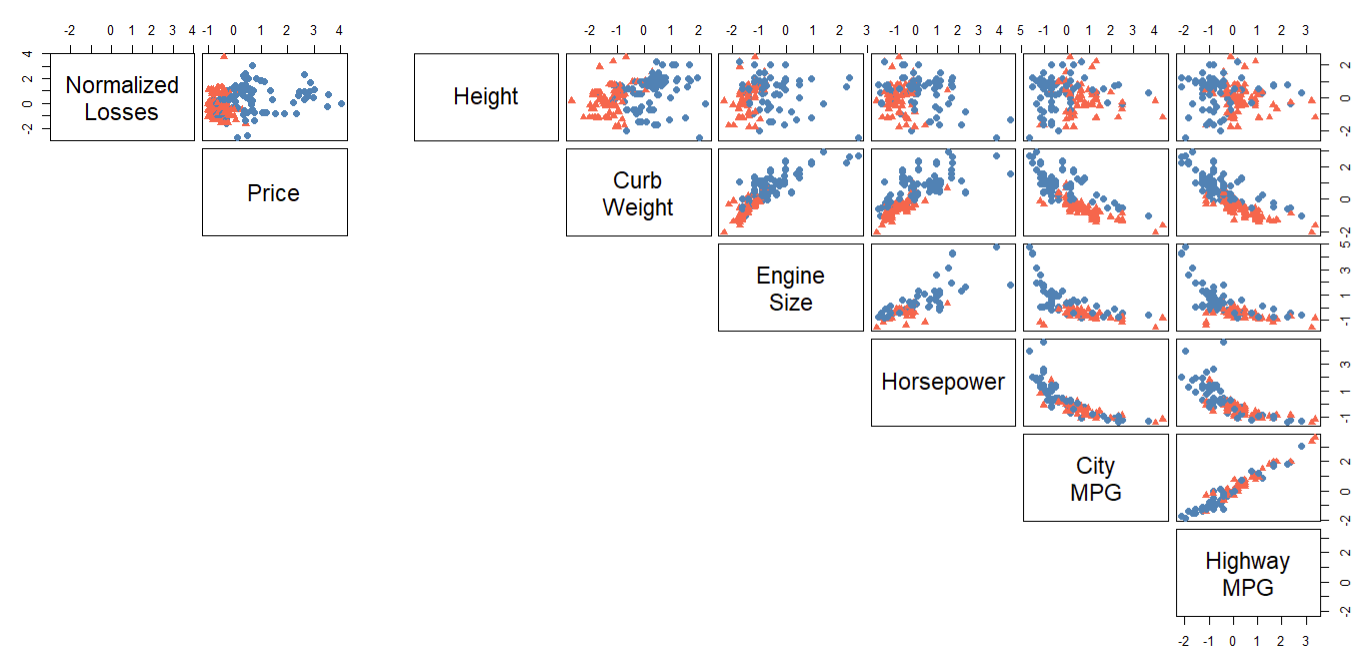}
    \caption{Subset of pairwise scatterplots of the standardized Automobile data, with missing values imputed and points colored according to the best-fitting model. The plots visualize the response-covariate, response-response, and covariate-covariate relationships.}
    \label{fig:auto_y_vs_x}
\end{figure}

%%%%%%%%%%%%%%%%%%%%%%%%%%%%%%%%%%%%%%%%

% \begin{figure}[!t]
%     \centering
    
%     \caption{Subset of pairwise scatterplots of the normalized Automobile data, with missing values imputed and points colored according to the best-fitting model. The left figure visualizes the relationship between responses, whereas the right figure visualizes the relationships between covariates.}
%     \label{fig:auto_yy_xx}
% \end{figure}

%%%%%%%%%%%%%%%%%%%%%%%%%%%%%%%%%%%%%%%%

\begin{figure}[!t]
    \centering
    \includegraphics[width=3.25in]{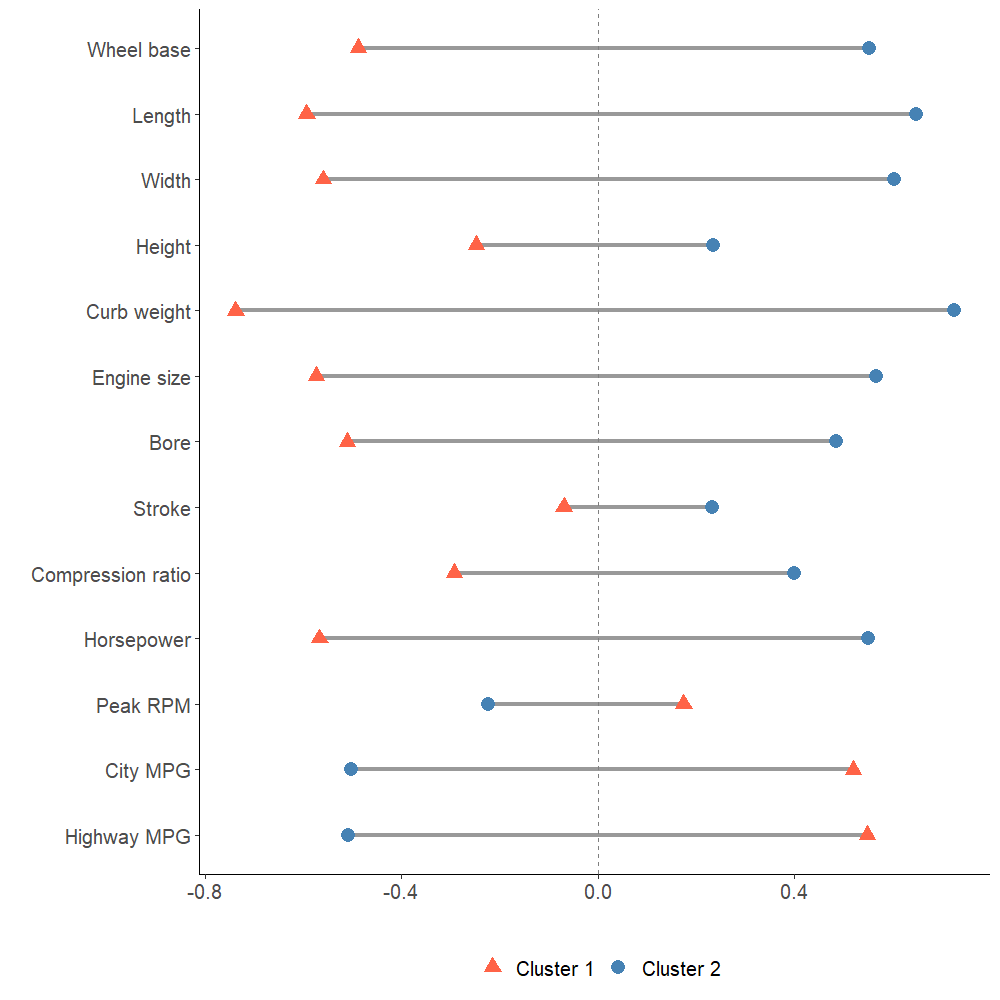}
    \caption{Comparison of standardized covariate means between the two clusters obtained by the best-fitting model.}
    \label{fig:auto_dumbbell}
\end{figure}

%%%%%%%%%%%%%%%%%%%%%%%%%%%%%%%%%%%%%%%%

% \begin{figure}[!t]
%     \centering
%     \includegraphics[width=6.15in]{fig_auto_boxplot.png}
%     \caption{Box plots of standardized variables by cluster and point type obtained by the best-fitting model.}
%     \label{fig:auto_boxplot}
% \end{figure}

\section{Conclusions}
\label{sec:conclusions}

% In this paper, we extended the contaminated Gaussian cluster-weighted
% model (CG-CWM) of \citet{punzo_robust_2017} to accommodate
% missing-at-random (MAR) values in both the response and covariate spaces.
% The resulting framework combines three features that are often required
% simultaneously in the analysis of heterogeneous regression data:
% clusterwise regression with random covariates, robustness to mildly
% atypical observations, and likelihood-based treatment of missing values.

In this paper, we propose a framework that combines three features that are often required simultaneously in the analysis of heterogeneous regression data: clusterwise regression with random covariates, robustness to mildly atypical observations, and likelihood-based treatment of missing data. This is achieved by extending the contaminated Gaussian cluster-weighted model (CG-CWM) of \citet{punzo_robust_2017} to accommodate values missing at random (MAR) in both the response and covariate spaces. In particular, the proposed model retains the ability of the CG-CWM to distinguish, within each cluster, among typical observations, outliers, good leverage points, and bad leverage points, while allowing arbitrary patterns of MAR values in both $\bY$ and $\bX$. The introduction of missing values makes parameter estimation substantially more involved than in the complete-data CG-CWM. In addition to the unknown component memberships and the two latent contamination indicators associated with $\bX$ and $\bY \mid \bX = \bx$, the
missing entries of the responses and covariates constitute an additional source of incompleteness. To address these sources jointly, we developed an ECM algorithm for maximum likelihood estimation.

A key element in the development of the proposed algorithm is the distributional representation obtained by conditioning on the two contamination indicators. Although the joint distribution of $\bX$ and $\bY$ within a CG-CWM component is not, in general, a contaminated Gaussian distribution, conditional on the contamination states it becomes multivariate Gaussian. This result makes it possible to derive the joint, marginal, and conditional distributions required to evaluate the E-step in closed form. In particular, it provides the conditional moments of the missing covariates and responses, together with the covariance terms needed to account for the uncertainty induced by missingness. An important consequence of this formulation is that the treatment of missing values is intrinsically connected with both clustering and atypical-point detection. Rather than filling in the missing entries in a preprocessing stage and subsequently treating them as observed, the proposed likelihood-based approach accounts for their uncertainty throughout model estimation. Moreover, the conditional distributions of the missing values depend on the component membership and on the latent contamination states. Hence, the information used for the implicit model-based imputation adapts to the local regression structure and to the type of observation under consideration. This feature is particularly relevant when missingness and atypicality coexist, because an observation that is likely to be an outlier or a leverage point need not be treated in the same way as a typical observation belonging to the same cluster.

% Choose/adapt the following paragraph according to the final simulation
% study included in the paper.
The numerical investigations considered in this paper illustrate the ability of the proposed methodology to recover the underlying clustering and regression structure in the presence of incomplete data, while preserving the robust properties of the contaminated Gaussian formulation. They also highlight the importance of incorporating the uncertainty associated with the missing entries directly into parameter estimation rather than relying on an external imputation procedure.

% Choose/adapt the following paragraph according to the final real-data
% application included in the paper.
The application to the \textit{Automobile} dataset complements the analysis based on the Gaussian CWM with MAR values by illustrating the additional information provided by the contaminated Gaussian formulation. Using the same response-covariate partition and a comparable analysis strategy, the proposed model simultaneously accommodates the naturally occurring missing values, identifies the latent clustering structure, and estimates the cluster-specific regression relationships. More importantly, the contaminated formulation provides a further layer of interpretation by assessing atypicality separately in the covariate space and in the conditional response space. Thus, within each cluster, automobiles can be further classified as typical observations, outliers, good leverage points, or bad leverage points. The application therefore illustrates how the proposed framework goes beyond clustering and model-based treatment of missing values, providing a unified characterization of both between-cluster heterogeneity and within-cluster atypicality in incomplete regression data.

Several extensions of the present work are possible. 
First, more flexible component distributions could be considered to accommodate skewness and other departures from the contaminated Gaussian assumption, in the fashion of the skewed cluster-weighted models of \citet{gallaugher2022multivariate}. Second, parsimonious parameterizations could be introduced to facilitate the application of the proposed model to higher-dimensional data. In particular, dimension reduction could be achieved through a factor-analytic decomposition of the component covariance matrices, along the lines of the cluster-weighted factor analyzers of \citet{subedi2013clustering,subedi2015cluster}. 
Alternatively, parsimonious covariance structures based on eigen-decomposition could be considered, following \citet{dang2017multivariate} for Gaussian cluster-weighted models and \citet{punzo2016parsimonious} for mixtures of contaminated Gaussian distributions. The latter approach could also be developed toward high-dimensional settings, in the spirit of \citet{punzo2020highdimensional}. Third, the methodology could be extended to other forms of cluster-weighted models, including models for non-continuous responses, following, for example, the generalized linear mixed cluster-weighted framework of \citet{ingrassia2015generalized}. Finally, the present analysis relies on the MAR assumption. Extending the framework to missing-not-at-random (MNAR) mechanisms would require additional assumptions on the missingness mechanism, for example through selection or pattern-mixture formulations \citep{little_statistical_2020}, and represents an interesting direction for future research.

% \blue{Overall, the proposed CG-CWM with MAR values provides a unified
% model-based framework in which clustering, regression, robust detection of
% atypical observations, and the treatment of missing responses and
% covariates are performed simultaneously. In doing so, it extends the scope
% of contaminated Gaussian cluster-weighted models to incomplete data while
% preserving their interpretation in terms of typical points, outliers, and
% good and bad leverage points.}

\section{Funding Details}\label{funding}

No funding was received for this research.

\section{Disclosure Statement}\label{disclosure-statement}

The authors have the following conflicts of interest to declare.

% The authors have the following conflicts of interest to declare (or replace with a statement that no conflicts of interest exist).

\section{Data Availability Statement}\label{data-availability-statement}

The \textit{Automobile} dataset \citep{schlimmer1985automobile} is freely accessible from the UCI Machine Learning Repository at \url{https://archive.ics.uci.edu/dataset/10/automobile}

% Deidentified data have been made available at the following URL: XX.

\phantomsection\label{supplementary-material}
\bigskip

\begin{center}

{\large\bf SUPPLEMENTARY MATERIAL}

\end{center}

\begin{description}
\item[Description:] The Supplementary Material contains a complete list of the notation used in the ECM algorithm, a simulation study to evaluate the proposed methodology, and additional results for the real data analysis.

% \item[Title:]
% Brief description. (file type)
% \item[R-package for MYNEW routine:]
% R-package MYNEW containing code to perform the diagnostic methods
% described in the article. The package also contains all datasets used as
% examples in the article. (GNU zipped tar file)
% \item[HIV data set:]
% Data set used in the illustration of MYNEW method in
% Section~\ref{sec-verify} (.txt file).
\end{description}

\bibliography{bibliography.bib}

\end{document}